
\documentclass{article}

\usepackage{amsmath,amssymb,amsfonts,amsthm}
\usepackage{graphicx,color,xcolor,psfrag}
\usepackage{subfigure}
\usepackage{hyperref}
\graphicspath{{./Figures/}}



\newcommand{\setdef}[2]{\left\{#1 \, : \; #2\right\}}
\newcommand{\map}[3]{#1: #2 \rightarrow #3}

\newcommand{\real}{\mathbb{R}}

\newcommand{\realnonnegative}{\mathbb{R}_{\geq0}}
\renewcommand{\natural}{\mathbb{N}}
\newcommand{\integer}{\mathbb{Z}}

\newtheorem{example}{Example}
\newtheorem{remark}{Remark}
\newtheorem{theorem}{Theorem}
\newtheorem{lemma}[theorem]{Lemma}
\newtheorem{proposition}[theorem]{Proposition}
\newcommand{\be}{\begin{equation}}
\newcommand{\ee}{\end{equation}}
\newcommand{\ba}{\begin{array}}
\newcommand{\ea}{\end{array}}

\newcommand{\K}{\mathcal{K}} 	
\newcommand{\I}{\mathcal{I}}		
\newcommand{\co}{\overline{\textup{co}}} 
\newcommand{\diag}{\textup{diag}}		

\newcommand{\Q}{{\bf q}}
\newcommand{\q}{{{\bf \hat q}}}
\newcommand{\1}{{\bf 1}}
\newcommand{\dist}{{\rm dist}}

\newcommand{\dd}{{\rm d}}

\title{Consensus and disagreement: the role of quantized behaviours in opinion dynamics}

\author{Francesca Ceragioli\thanks{Dipartimento di Scienze Matematiche, Politecnico di Torino, Torino, Italy. \texttt{francesca.ceragioli@polito.it}} \and Paolo Frasca\thanks{Department of Applied Mathematics, University of Twente, Enschede, Netherlands. \texttt{p.frasca@utwente.nl}}}

\begin{document}
\maketitle

\begin{abstract}
This paper deals with continuous-time opinion dynamics that feature the interplay of continuous opinions and discrete behaviours. In our model, the opinion of one individual is only influenced by the behaviours of fellow individuals. The key technical difficulty in the study of these dynamics is that the right-hand sides of the equations are discontinuous and thus their solutions must be intended in some generalized sense: in our analysis, we consider both Carath\'eodory and Krasowskii solutions. We first prove existence and completeness of Carath\'e\-o\-dory solutions from every initial condition and we highlight a pathological behavior of Carath\'eo\-do\-ry solutions, which can converge to points that are not (Carath\'eodory) equilibria. Notably, such points can be arbitrarily far from consensus and indeed simulations show that convergence to non-consensus configurations is very common. In order to cope with these pathological attractors, we then study Krasowskii solutions. We give an estimate of the asymptotic distance of all Krasowskii solutions from consensus and we prove its tightness via an example: this estimate is quadratic in the number of agents, implying that quantization can drastically destroy consensus. However, we are able to prove convergence to consensus in some special cases, namely when the communication among the individuals is described by either a complete or a complete bipartite graph.
\end{abstract}

\noindent
{\bf Keywords.} Opinion dynamics, quantized consensus, disagreement, discontinuous differential equations.

\section{Introduction}
A fundamental assumption in opinion dynamics is that one's opinion is attracted by other's opinions (\cite{NEF-ECJ:11,CC-SF-VL:09}). 
If the opinion of individual $i$ is described  by  a variable $x_i\in \real$, 
then this assumption leads to describe the evolution of the opinions  by a set of differential equations:
 \be
 \label{eq:nonq}\dot x _i(t)=\sum _{j} a_{ij}[x_j(t)-x_i(t)] \qquad 
 i\in{\cal I}=\{ 1,...,N\}
 \ee
 where $a_{ij}\in \realnonnegative $ and $a_{ij}>0$ means that the opinion of the  individual $i$ is influenced by the opinion of the individual  $j$.
 As long as there is at least one individual who can influence (albeit indirectly) all others, then it can be proved that {\em consensus} is asymptotically achieved, namely there exists $\alpha\in \real$ such that $x_i(t)\to \alpha$ as $t\to +\infty$ for all $i\in \cal I$. 
However, experience suggests that in reality consensus is not always achieved, but disagreement persists: clearly, the opinion dynamics model should include some additional feature.

In this paper, we postulate that the individuals can not directly perceive the private opinions of the others, but only observe their displayed {\em behaviours}. Indeed, in some situations individuals may not be able to express their opinions precisely, but only through a limited number of behaviours or actions: let us think of consumers' choices, electoral votes, or stereotyped interactions in online social media.
%
%
%
%
Even though we are assuming that the opinions are real-valued, behaviours are better described as elements of a finite or discrete set:
hence, the behaviour of an individual shall be a suitable {\em quantization} of his/her opinion. 
If we choose for simplicity to represent behaviours as integers, equations~\eqref{eq:nonq} can be replaced by   
 \be\label{eq:dynamics}
 \dot x_i(t)=\sum_{j}a_{ij} [q(x_j(t))-x_i(t)], \qquad i\in\I, 
 \ee
 where the map $\map{q}{\real}{\integer}$ is defined as $q(s)=\lfloor s+\frac{1}{2} \rfloor$. 
This new model is a very simple modification  of~\eqref{eq:nonq}, which has no pretension to describe social dynamics precisely, but has the aim to make the role of quantization evident. The feature that distinguishes it from~\eqref{eq:nonq} is that consensus is not achieved in general. From the point of view of opinion dynamics, this observation allows us to explain the persistence of disagreement as an effect of the limited number of behaviours allowed. 

\begin{figure}[htb]\centering\hspace{-2cm}
\setlength{\unitlength}{0.05in} 
\scalebox{1.5}{
\begin{picture}(40,30)(-20,-15)

{\color{lightgray}
\multiput(0,-8)(0,1){18}{\circle*{0.01}}
\multiput(4,-8)(0,1){18}{\circle*{0.01}}
\multiput(8,-8)(0,1){18}{\circle*{0.01}}
\multiput(12,-8)(0,1){18}{\circle*{0.01}}
\multiput(16,-8)(0,1){18}{\circle*{0.01}}
\multiput(20,-8)(0,1){18}{\circle*{0.01}}
\multiput(0,-8)(1,0){21}{\circle*{0.01}}
\multiput(0,-4)(1,0){21}{\circle*{0.01}}
\multiput(0,4)(1,0){21}{\circle*{0.01}}
\multiput(0,8)(1,0){21}{\circle*{0.01}}
}

\put(10,-12){\vector(0,1){24}}
\put(-2,0){\vector(1,0){24}}
\thicklines
\put(8,0){{\line(1,0){4}}}
\put(12,4){{\line(1,0){4}}}
\put(4,-4){{\line(1,0){4}}}
\put(0,-8){{\line(1,0){4}}}
\put(16,8){{\line(1,0){4}}}
\put(8,0){\circle*{1}}
\put(12,4){\circle*{1}}
\put(4,-4){\circle*{1}}
\put(0,-8){\circle*{1}}
\put(16,8){\circle*{1}}
\put(20,-3){${ s}$}
\put(4,10){${ q(s)}$}

\put(5,-2){{\tiny $-1/2$}}
\put(11,-2){{\tiny $1/2$}}
\put(8,4){{\tiny $1$}}
\put(7,-5){{\tiny $-1$}}
\end{picture}
\vspace{-5mm}
}
\caption{The uniform quantizer $\map{q}{\real}{\integer}$.}\label{fig:quantizer}
\end{figure}

\smallskip
The aim of this paper is to study the effect of quantization of others' states in a consensus model. This is done by undertaking a rigorous mathematical analysis of~\eqref{eq:dynamics}.
The key technical difficulty in the study of this model is that the right-hand sides of the equations are discontinuous and their solutions must be intended in some generalized sense. As the {\em first} contribution, we prove existence and completeness of Carath\'eodory solutions from every initial condition. This is a relevant fact because most discontinuous systems, including well-known models of opinion dynamics, do not have complete Carath\'eodory solutions~\cite{FC-PF:11}: even proving existence for most initial conditions can be a daunting task~\cite{VDB-JMH-JNT:09a,VDB-JMH-JNT:09b}.  

As the {\em second} contribution, we highlight a pathological behavior of Carath\'eo\-do\-ry solutions, which can converge to points that are not (Carath\'eodory) equilibria: actually, these points can be arbitrarily far from consensus. The presence of these pathological attractors leads us to consider also Krasowskii solutions, which are trivially seen to exist and be complete. Indeed, Krasowskii solutions can not converge to points that are not (Krasowskii) equilibria. Another advantage is that, as Carath\'eodory solutions are particular Krasowskii solutions, results that hold for Krasowskii solutions are more general. 

As the {\em third} contribution, we derive an estimate of the asymptotic distance of Krasowskii solutions from consensus and we prove by means of an example that it can not be substantially improved in general. This estimate is quadratically increasing in the number of agents and its tightness implies that quantization can drastically destroy consensus.  Actually, simulations show that most graphs imply convergence to non-consensus configurations. However, consensus can happen in some special cases.
Indeed, as the {\em fourth} contribution, we prove convergence to consensus when communication among the individuals is described either by a complete graph or by a complete bipartite graph.

\smallskip
\paragraph{Relations with the literature.}
This paper relates to various bodies of work in control theory and in mathematical sociology.
In the last ten years, control theorist have widely studied quantized versions of consensus algorithms. Since giving a complete overview would be impossible here, we just mention some papers whose approaches are particularly close to ours.
First of all, in~\cite{RC-FF-PF-TT-SZ:07} the authors consider a discrete-time version of~\eqref{eq:dynamics}: their analysis is limited to observing that the algorithm may not converge to consensus and then abandoned. The poor perfomance of the dynamics in terms of approaching consensus explains the absence\footnote{A preliminary and incomplete study on the topics of this article was published in the Proceedings of the European Control Conference as~\cite{FC-PF:15}, where only all-to-all communication was considered.} of known results about~\eqref{eq:dynamics}. 
Instead, papers like~\cite{FC-CDP-PF:10a,PF:11,DVD-KHJ:10,MG-DVD:13,SL-TL-LX-MF-JZ:13} have considered other possible quantizations of~\eqref{eq:nonq}: in~\cite{FC-CDP-PF:10a,PF:11} all states are quantized in the right-hand side, while in~\cite{DVD-KHJ:10,MG-DVD:13} distances between couples of states are seen through the quantizer. In these papers convergence to consensus is proved under appropriate but generally mild assumptions~\cite{JW-XY-HS-KHJ:16}. 

%
In the opinion dynamics literature, different explanations of persistence of disagreement have been proposed: among them, we recall in~\cite{NEF-ECJ:11} the persisting influence of the initial opinions; in~\cite{RH-UK:02} the bounded confidence of the individuals; in~\cite{MM-AP-SR:07,DA-GC-FF-AO:11} the presence of stubborn individuals; and in~\cite{CA:13} the occurrence of antagonistic interactions.
Here, we support the claim that quantization can be a source of disagreement. The fact that others' opinion are materialized by means of discrete behaviours is a natural observation, which has been made by social scientists and socio-physicists and has been addressed in a few models including~\cite{DU:03,ACRM:08,NC-IM-SM-SS:16} and~\cite[Chapter~10]{NEF-ECJ:11}. Discrete behaviours are the outcome of limited verbalization capabilities in~\cite{DU:03} or represent actions taken by the indiduals (CODA models) in~\cite{ACRM:08}. 
These papers feature dynamics that may not reach consensus and that involve quantization together with other, possibly nonlinear, effects. In comparison, our proposed model~\eqref{eq:dynamics} can be understood as an effort to single out the effects of quantization only. We consider the case of multiple quantization levels, but two-level quantization as in~\cite{NC-IM-SM-SS:16} can be easily obtained as a particular case.


\paragraph{Outline of the paper.}
In Section~\ref{sec:preliminary} we present some alternative forms of~\eqref{eq:dynamics}, introduce Carath\'eodory and Krasowskii solutions and prove some of their basic properties.
Section~\ref{sec:equilibria} is devoted to equilibria, providing the relevant definitions and examples.
In Section~\ref{sec:quant-dist} we prove the asymptotic estimate of the distance of solutions from consensus and show that it is sharp.
In Section~\ref{sec:consensus} we study the special cases of all-to-all and bipartite all-to-all communication and prove that in these cases consensus is achieved. Finally, Section~\ref{sec:simulations} presents some simulations and Section~\ref{sec:outro} concludes the paper.

\section{Fundamental properties of the dynamics}\label{sec:preliminary}

We begin this section by rewriting equations~\eqref{eq:dynamics} in some alternative forms that allow us to see the model from different points of view. 

First of all, we can  think that communication among individuals is described by a weighted directed graph whose vertices are individuals. The numbers 
$a_{ij}\ge0$ are the entries of the weighted adjaciency matrix $A$ of the graph, namely $a_{ij}>0$ if the agents $i$ and $j$ are linked, $a_{ij}=0$ if they are not linked and $a_{ii}=0$.
We can then observe that 
\begin{align*}
\dot x _i(t)=& \sum_{j\not =i}a_{ij} [q(x_j(t))-x_j(t)+x_j(t)
  -x_i(t)]\\
=& \sum_{j\not =i}a_{ij} (x_j(t))-x_i(t))+\sum_{j\not =i}a_{ij}
 (q(x_j(t))-x_j(t))
\end{align*}
If we denote by $d_i:=\sum_{j}a_{ij}$ the weighted degree of the $i$-th vertex, by $D=\diag (d_i)$, and by $\Q (x)=(q(x_1),\ldots ,q(x_N))^\top $, 
then we can define the Laplacian matrix of the graph $L=D-A$ and write
\begin{align}\label{eq:dynamics-L}
\dot x=& -Lx+A(\Q (x)-x)\\\nonumber=&-L(x-x_a\1)+A(\Q (x)-x)
\end{align}
where for the second equality we have used $\1 ^\top =(1,...,1)$ and
 $x_a(t)=\frac{1}{N}\1^\top x(t)$, together with the fact that $L\1=0$. System~\eqref{eq:dynamics-L} can be seen as the classical consensus system perturbed by others' states quantization errors. This interpretation will become useful in Section~\ref{sec:quant-dist}.
Clearly, we can also write~\eqref{eq:dynamics-L} as
\be\label{eq:dynamics-nn}
\dot x=-Dx+A\Q (x),
\ee
which prompts a connection with the theory of neural networks.
\begin{remark}[Neural networks] By adding a constant term $I\in \real^N$ in~\eqref{eq:dynamics-nn} we get 
\be
\dot x=-Dx+A\Q (x)+I,
\ee
which is the typical system used in order to describe neural networks. In this context, individuals represent neurons and the function $q$ is called activation function. In the basic models $q$ is assumed to be smooth and increasing, but the literature has also considered weaker assumptions. In particular, in~\cite{MF-PN:03} $q$ can be discontinuous but it must be increasing and bounded,  and the matrix $A$ is assumed invertible.  A major difference between our model and the one in~\cite{MF-PN:03} is that we have multiple equilibria (Section~\ref{sec:equilibria}), whereas \cite{MF-PN:03} assumes a single equilibrium point. 
\end{remark}
%

When we look at system~\eqref{eq:dynamics-nn},
we observe that for every ${\bf k}\in \integer^N$ the vector $\Q$ is constant on each set 
$$S_{{\bf k}}=\{ x\in \real ^N: k_i-\frac{1}{2}\leq x_i <k_i+\frac{1}{2}, i=1,\dots,N \}.$$ 
This fact makes it evident that the system is affine if restricted to each set $S_{{\bf k}}$. Moreover its right-hand side is discontinuous on the set $\Delta=\bigcup_{{\bf k} \in \integer^N} \partial S_{\bf k}$, where $\partial S_{\bf k}$ is the boundary of $S_{\bf k}$.  In general, existence of solutions of equations with discontinuous right-hand side is not guaranteed. 
For this reason different types of solutions have been introduced in the literature (see~\cite{JC:08-csm}). 
The notion of solution nearest to the classical one is that of Carath\'eodory solution. The main problem in using Carath\'eodory solutions is that, often, they do  not exist.   Here we prove that Carath\'eodory solutions do exists and are complete.  Nevertheless the discontinuity causes the strange phenomenon of solutions converging to  points which are not equilibria.  In order to better understand this phenomenon we also study Krasowskii solutions: such points are in fact Krasowskii equilibria.  

We now formally introduce Carath\'eodory and Krasowskii solutions and study their basic properties. We refer the reader to~\cite{JC:08-csm,OH:79} for overviews on generalized solutions of discontinuous differential equations. System~\eqref{eq:dynamics} can be cast in the general form 
\be
\label{eq:dynamics-f}
\dot x= f(x),
\ee
provided $\map{f}{\real ^N}{\real ^N}$ is the vector field given by $f_i(x)=\sum_{j}a_{ij}[q(x_j)-x_i]$.
Let $I\subset \real$ be an interval of the form $(0,T)$. An absolutely continuous function $\map{x}{I}{\real^N}$ is a {\em Carath\'eodory solution} of~\eqref{eq:dynamics-f} if it satisfies~\eqref{eq:dynamics-f} for almost all $t\in I$ or, equivalently, if it is a solution of the integral equation
$$x(t)=x_{0}+\int _0^t f(x(s))\dd s.$$
%
An absolutely continuous function $\map{x}{I}{\real^N}$ is a Krasowskii solution of~\eqref{eq:dynamics-f} if
for almost every $t\in I$,  it satisfies
\be\label{eq:Krasowskii}
\dot x(t)\in \K f(x(t)),
\ee
where $$ \K f(x) =\bigcap_{\delta>0}\co(\setdef{f(y)}{y \text{ such that } \|x-y\|<\delta}).$$
%
We recall that in general any Carath\'eodory solution is also a Krasowskii solution and we observe that, for the specific dynamics~\eqref{eq:dynamics-f}, Krasowskii solutions coincide with Filippov solutions that are often adopted for discontinuous systems.
The following result states the basic properties of the solutions of~\eqref{eq:dynamics-f}.

\begin{theorem}[Properties of solutions]\label{prop:basic_properties}
\begin{itemize}

\item [(i)] (Existence) For any initial condition there exist a Carath\'eodory solution and a Krasovskii solution of~\eqref{eq:dynamics}.
\item [(ii)] (Boundedness) Any Carath\'eodory or Krasowskii solution of~\eqref{eq:dynamics} is bounded on its domain.

\item [(iii)] (Completeness)  Any Carath\'eodory or Krasowskii  solution starting at
  $t_0\in \real$ is defined on the set $[t_0, +\infty )$.

\end{itemize}
\end{theorem}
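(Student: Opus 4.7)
My plan is to tackle boundedness (ii) first, because it is the technical heart of the argument and drives both completeness and the global construction needed for Carath\'eodory existence. The idea is to exhibit a positively invariant family of boxes tailored to the quantizer: for each $k\in\integernonnegative$, set $B_k=\{x\in\real^N:\|x\|_\infty\le k\}$. On the face $\{x_i=k\}\cap B_k$, every coordinate $x_j$ satisfies $|x_j|\le k$, so by monotonicity of $\lfloor\cdot\rfloor$ and integrality of $k$ we get $q(x_j)=\lfloor x_j+\tfrac12\rfloor\le\lfloor k+\tfrac12\rfloor=k$; hence $f_i(x)=\sum_j a_{ij}(q(x_j)-k)\le 0$. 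This bound is preserved under passage to limits of $f(y)$ on a small neighbourhood of $x$ and then to the closed convex hull, so every element of $\K f(x)$ has nonpositive $i$-th component. The symmetric argument at $\{x_i=-k\}$ closes the case, and choosing $k=\lceil\|x_0\|_\infty\rceil$ confines every Carath\'eodory or Krasowskii solution to $B_k$ on its entire domain.

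Once (ii) is established, completeness (iii) follows from a standard extension argument: on $B_k$ both $f$ and $\K f$ are uniformly bounded, so an absolutely continuous solution cannot blow up in finite time and its maximal interval of existence must be $[t_0,+\infty)$. Krasowskii existence in (i) is also standard, because $f$ is locally bounded and therefore $\K f$ is upper semicontinuous with nonempty compact convex values, so the classical existence theorem for differential inclusions~\cite{JC:08-csm} applies.

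The genuinely delicate point is the existence of a Carath\'eodory solution in (i), which I would prove constructively by exploiting the piecewise-affine structure. Inside each region $S_{\bf k}$ the dynamics reduces to the linear ODE $\dot x=-Dx+A{\bf k}$, so starting from $x_0\in S_{{\bf k}_0}$ there is a unique smooth solution that either remains in $S_{{\bf k}_0}$ forever (and converges to the affine equilibrium $D^{-1}A{\bf k}_0$ if this point lies in $S_{{\bf k}_0}$), or reaches some face $\{x_i=k_i\pm\tfrac12\}$ at a first exit time $t_1$, at which the construction is restarted in the adjacent region. The crucial observation making this well defined is that $a_{ii}=0$ forces $f_i$ to be independent of $q(x_i)$ and hence continuous across $\{x_i=k_i\pm\tfrac12\}$, so $\dot x_i(t_1)$ is unambiguous and, whenever nonzero, the trajectory crosses transversally into a uniquely determined adjacent region.

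The main obstacle I expect here is ruling out a Zeno accumulation of switchings in finite time and handling the nongeneric tangency cases $\dot x_i(t_1)=0$ and the ``corner'' situations where several coordinates reach half-integer values simultaneously. I would combine the box bound from (ii)---which restricts the trajectory to a finite collection of regions $S_{\bf k}$---with the explicit exponential flow inside each region to derive uniform lower bounds on the dwell time between nondegenerate crossings; the degenerate crossings can be resolved by picking, among the finitely many affine vector fields available at the boundary point, one whose flow yields a local Carath\'eodory continuation, which in the tangential case reduces to a sliding segment along the face (itself a valid Carath\'eodory solution, since on that face $\dot x_i=0$ coincides with the $i$-th component of the affine vector field of the adjacent region, thanks again to $a_{ii}=0$).
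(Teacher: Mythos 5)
Your treatment of (ii) and (iii) is sound and takes a slightly different route from the paper: you use positively invariant sup-norm boxes of integer radius, whereas the paper tracks the minimal and maximal components together with their quantization levels $q_m(t)$, $q_M(t)$ (obtaining along the way the monotonicity of these levels for Carath\'eodory solutions, which is reused later in the paper). Both arguments work; note only that to pass from ``every element of $\K f(x)$ has nonpositive $i$-th component on the face $\{x_i=k\}$'' to the conclusion that \emph{every} Krasowskii solution stays in the box, you still need a short comparison argument on $t\mapsto\max_i x_i(t)$ (or a strong-invariance theorem), since the tangency condition alone only gives weak invariance for differential inclusions. Krasowskii existence and the continuation argument for (iii) are standard, as you say.

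The genuine gap is in the Carath\'eodory existence proof at points where several coordinates simultaneously sit at half-integer values. There you write that degenerate crossings ``can be resolved by picking, among the finitely many affine vector fields available at the boundary point, one whose flow yields a local Carath\'eodory continuation.'' That such a choice exists is precisely the nontrivial content of the theorem, and it is false for general piecewise-affine discontinuous fields: the standard obstruction to Carath\'eodory existence is exactly a corner at which each of the $2^M$ adjacent affine fields points out of its own sector. The paper rules this out by an iterative sector-selection argument that exploits the sign structure of the dynamics: labelling the sectors by $B\in\{0,1\}^M$, the $i$-th component of the limit field satisfies $(f_B)_i=(f_{B^1})_i+\sum_j a_{ij}B_j$, which is nondecreasing in $B$ because $a_{ij}\ge 0$; hence once a component is nonnegative in the all-zero sector it is nonnegative in every sector, so that coordinate can be fixed to $1$ and the search recursed, terminating in at most $M$ steps with a sector into which its own field points (plus a separate tangency analysis, showing the tangency persists along the face in a neighbourhood, when some components vanish). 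Your proposal needs this argument, or an equivalent one, to become a proof. Separately, the ``uniform lower bound on the dwell time between nondegenerate crossings'' you invoke against Zeno behaviour is neither established nor needed: local existence from every initial condition, the a priori bound, and a maximal-extension argument already yield completeness, which is how the paper proceeds.
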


\begin{proof}  
(i-C)
First of all, we remark that the right-hand side of~\eqref{eq:dynamics} is continuous at any point in the interior of 
$S_{{\bf k}}$ for any ${\bf k}\in \integer ^N$, then 
 local  solutions with initial conditions in $\real ^N\setminus\Delta$ do exist.
Then, we consider initial conditions in $\Delta$.
   For any $x_0\in \real ^N$ we denote by $I(x_0)$ the subset of
$\{1,\ldots ,N \}$ of the indices $i$ such that 
$x_{0i}=k_i+\frac{1}{2}$ for some
$k_i\in \integer $ and by $M$ the cardinality of $I(x_0)$.

We first consider initial conditions $x_0$ such that $M=1$ and $I(x_0)=\{ i\}$,
i.e.\  
$x_{0i}=k_i+\frac{1}{2}$ for some $k_i\in \integer $ and
$x_{0j}\not =h+\frac{1}{2}$ for any $j\not =i$ and any  $h\in \integer $. 
Let us denote
 $$s_{i}(x)=x_i-k_i-\frac{1}{2},$$ 
$$S_{i}^+=\{ x\in \real ^N: x_i-k_i-\frac{1}{2}\geq 0\},$$  
$$S_{i}^-=\{ x\in \real ^N: x_i-k_i-\frac{1}{2}<0\},  $$ 
$$f| _{S_{i}^-}(x_0)=\lim _{x\in S_{i}^-, x\to x_0 }f(x).$$
We have that $\nabla s_{i}(x_0)=e_i$ (the $i$-th vector of the canonical basis)
$$
\ba{rl}
a(x_0)& =\nabla s_{i}(x_0)\cdot f| _{S_{i}^+}(x_0)= \nabla s_{i}(x_0)\cdot f| _{S_{i}^-}(x_0)\\
& =\sum_{j\not =i}a_{ij}(q(x_{0j})-k_i-\frac{1}{2}).
\ea $$
 If $a(x_0)<0$, then there is a solution starting at $x_0$ which satisfies
the equations $\dot x= f| _{S_{i}^-}(x)$ and stays in $S_{i}^-$ in an interval of
the form $(t_0, t_0+\epsilon )$ for some $\epsilon >0$.
If $a(x_0)\geq 0$, then there is a solution starting at $x_0$ which satisfies
the equation $\dot x= f| _{S_{i}^+}(x)$ and stays in $S_{i}^+$ in an interval of
the form $(t_0, t_0+\epsilon )$ for some $\epsilon >0$.
Note in particular that if $a(x_0)=0$, then the vector field $f|
_{S_{i}^+}$ is tangent to $S_{i}^+$ not only in $x_0$ but also in a neighbourhood of $x_0$ intersecated with the discontinuity surface. 
In fact there exists a neighbourhood $N(x_0)$ of $x_0$ such that for all $x\in N(x_0)$ one has $q(x_j)=q({x_0}_j)$ for all $j\not =i$. Moreover if $x_i=k_i+1/2$ one gets $a(x)=a(x_0)$. 

We now consider initial conditions $x_0$ such that 
  $1<M\leq N$. The
vector field $f$ has $2^M$ limit values at $x_0$ corresponding to the
$2^M$ sectors defined by the inequalities $x_i-k_i-\frac{1}{2}\geq 0$
and   $x_i-k_i-\frac{1}{2}< 0$, $i\in I(x_0)$. 
We describe  these sectors by means of
 vectors $B\in \{ 0,1\}^N\subset \real ^N$. Let $H(t)=1$ if $t\geq 0$ and $H(t)=0$ if
$t<0$.
We  define $B_i= H(x_i-k_i-\frac{1}{2})$ if
$i\in I(x_0)$ and $B_i=0$ if $i\notin I(x_0)$.
Let $f_B(x_0)=\lim _{x\to x_0, x\in B}f(x)$. In the following we denote $f_B(x_0)=f_B$. 
We want to prove that there exists  $B$ such that
$H((f_B)_i)=B_i$ for all $i\in I(x_0)$. This means that the vector
field $f_B$ points inside the sector $B$ at $x_0$.

Preliminarily, note that the $i$th component of $f_B$ can be written as 
\be\label{eq:f_B}
\ba{rl}
(f_B)_i& =\sum_j a_{ij}(k_j+B_j)-d_i (k_i+\frac{1}{2})= \sum_j a_{ij}k_j-d_i (k_i+\frac{1}{2})+\sum_{j}a_{ij}B_j
\ea
\ee
Now, we start by considering the sector $B^1$ such that $(B^1)_i=0$ for all
$i\in I(x_0)$. If  $H((f_{B^1})_i)=0$ for all $i\in I(x_0)$, we have finished. Otherwise, there
exists $i\in I(x_0)$ such that $H((f_{B^1})_i)=1$. Assume without loss of generality that $i=1$, i.e.\ $H((f_{B^1})_1)=1$. Then for all $B\not = B^1$ we have
$H((f_{B})_1)=1$. 
In fact if $(f_{B^1})_1>0$ then also $(f_B)_1=(f_{B^1})_1+\sum_{j}a_{1j}B_j>0$.

We then  examine only those $B$ such that
$B_1=1$. In particular the next $B$ we consider, which we call $B^2$,
is such that $(B^2)_1=1$ and $(B^2)_i=0$ for all other $i\in
I(x_0)$. If  $H((f_{B^2})_i)=0$ for all $i\in I(x_0)\setminus \{ 1\}$, then we have
finished. Otherwise, there exists $j\in I(x_0)\setminus \{ 1\}$ such that $H((f_{B^1})_j)=1$. Assume that such $j=2$, i.e.\ $H((f_{B^2})_2)=1$.  Then for all $B\not\in\{ B^1, B^2\}$ we have
$H((f_{B})_2)=1$. We can then restrict our attention to those $B$ such
that $B_1=B_2=1$, and so forth. By proceeding in this way, in $M$ step at most we
find the sector $B$ with the desired property. 
 
As already mentioned, the meaning of the condition $H((f_B)_i)=B_i$ for all $i$ is that the
vector field $f_B$ is directed inside $B$. 
If $f_B$ points strictly inside $B$, i.e.\  $(f_B)_i\not =0$ for all $i$, then there is a solution that enters the sector.
If instead some components of $f_B$ are null, $f(B)$ is tangent to the boundary of $B$, and, more precisely, to 
$\partial B\cap [\cap_{i\in \tilde{I}(x_0)}\{ x: x_i=k_i+1/2\}]$, where
$\tilde I(x_0)$ is the subset of $I(x_0)$ such  that  $(f_B)_i=0$ if $i\in \tilde I (x_0)$.
Thanks to~\eqref{eq:f_B} there exists a neighbourhood $N(x_0)$ of $x_0$ such that  $(f_B(x))_i=0$ for all $i\in \tilde I(x_0)$ and  for all $x\in N(x_0)\cap \partial B\cap [\cap_{i\in \tilde{I}(x_0)}\{ x: x_i=k_i+1/2\}]$, i.e.\ $f_B(x)$ remains tangent to $\partial B\cap [\cap_{i\in \tilde{I}(x_0)}\{ x: x_i=k_i+1/2\}]$ so that there exists a solution
$\varphi $ of~\eqref{eq:dynamics}, and $t_0\in \real, \epsilon >0$ 
 such that $\varphi (t_0)=x_0$ and  $\dot \varphi (t)=f_B(\varphi
 (t))$ for almost every $t\in (t_0,t_0+\epsilon )$.

(i-K) Existence of Krasowskii solutions follows from the fact that the function $q$ is locally bounded.

(ii-C) Let $x(t)$ be a Carath\'eodory solution and let  $m$ be any index in $\I$  such that $x_m(t)= \min \{ x_i(t), i\in \I\}$. For $i\in I$, \   $x_i(t)\geq x_m(t)$. 
Let $q(x_m(t))=q_m(t)$. If $x_m(t)\in \left[q_m(t)-\frac{1}{2}, q_m(t) \right]$, one has $q(x_i(t))\geq x_m(t)$ for all $i\in\I$ and 
then $\dot x_m (t)=\sum_{j}a_{ij} [q(x_j(t))-x_m(t)]\geq 0$. This implies that $x_m(t)$ is lower bounded by $\min\{ x_m(0), q_m(0)\}$. 
The previous calculation also shows that the function $q_m(t)$ is nondecreasing.
Analogously, if  $M$ is any index such that $x_M(t)= \max \{ x_i(t), i\in \I\}$, $q(x_M(t))=q_M(0)$, we get that $x_M(t)$ is upper bounded by $\max\{ x_M(0), q_M(0)\}$, and the fuction $q_M(t)$ is nonincreasing.  

(ii-K) Let now $x(t)$ be a Krasowskii solution $m,M, x_m,x_M, q_m,q_M$ be defined as in (ii-C). 
As for Carath\'eodory solutions, if $x_m(t)\in \left(q_m(t)-\frac{1}{2}, q_m(t) \right]$, one has $q(x_i(t))\geq x_m(t)$ for all $i\in\I$ and 
then $\dot x_m (t)=\sum_{j}a_{ij} [q(x_j(t))-x_m(t)]\geq 0$
while if $x_m(t)=q_m(t)-\frac{1}{2}$, $\dot x_m(t)$ may be negative and then 
 $x_m(t)$ is lower bounded by $q_m(0)-1$ and $x_M(t)$ is upper bounded by $q_M(0)+1$.

(iii-C-K) Both Carath\'eodory and Krasowskii solutions can be continued up to $+\infty$ thanks to local existence and boundedness of solutions. 
\end{proof}

\begin{remark}[Monotonicity and limit of minimum and maximum quantization level for Carath\'eodory solutions]\label{rem:monotonicity}
Many consensus-seeking dynamics enjoy the property that the smallest (largest) component is nondecreasing (nonincreasing). This fact is {\em not} true for system~\eqref{eq:dynamics}. Instead, we have shown in the proof of Theorem~\ref{prop:basic_properties} that, for Carath\'eodory solutions, the smallest and the largest quantization levels $q_m(t)$ and $q_M(t)$ are nondecreasing and nonincreasing, respectively. Since they are bounded  and take values in $\integer$, it follows that they are definitively constant. 
For any Carath\'eodory solution $x(t)$ of~\eqref{eq:dynamics} there exist $T^*\in \real$ and  $q^*_m, q^*_M\in \integer$ such that for
any $t\geq T^*$ $\min \{ q(x_i(t)), i=1,\dots, N\}=q^*_m$ and $\max \{ q(x_i(t)), i=1,\ldots ,N\}=q^*_M$. This monotonicity of the extremal quantization levels does {\em not} hold for Krasowskii solutions, as can be seen from Example~\ref{example:4-path-contd} in the next section.
\end{remark}

\begin{remark}[Infinite versus finite quantization levels] 
The function $q$ takes values in $\integer$, and this means that we a priori admit infinite types of behaviours. Nevertheless a consequence of the previous remark is that the number of quantization levels actually assumed in the evolution of the system is finite, once the initial condition is fixed.  
In particular, if the initial values of all the components are in the interval $(0,1)$, the quantizer only takes the  values $0$ and $1$: in this way we get  the case of binary behaviours. 
\end{remark}

The following example shows that in general uniqueness of solutions is not guaranteed. 
\begin{example}[Multiple Carath\'eodory solutions]\label{example:non-unique}
Consider the dynamics 
\begin{align*}
\dot x_1=&q(x_2)-x_1\\
\dot x_2=&q(x_1)-x_2
\end{align*} with initial condition
$(x_1(0), x_2(0))^\top =(1/2,1/2)^\top $. There are two solutions issuing from this
point, whose trajectories are the line segments joining the initial
condition with the points $(0,0)^\top $ and $(1,1)^\top $; see Figure~\ref{fig:non-unique}.
\end{example}

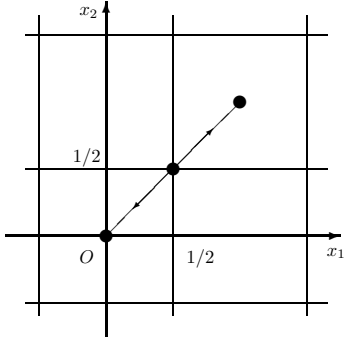
\begin{figure}
\setlength{\unitlength}{0.05in} 
\centerline{ 
\scalebox{0.7}{
\begin{picture}(60,50)(-30,-25)
{\thicklines
\put(-10,-25){\vector(0,1){50}}
\put(-25,-10){\vector(1,0){50}}
}
\put(0,0){\circle*{2}}
\put(-10,-10){{\circle*{2}}}
\put(-14,-14){$O$}
\put(10,10){{\circle*{2}}}
\multiput(-20,-22)(20,0){3}{\line(0,1){45}}
\multiput(-22,-20)(0,20){3}{\line(1,0){45}}
\put(0,0){\line(1,1){10}}
\put(0,0){\line(-1,-1){10}}
\put(0,0){\vector(1,1){6}}
\put(0,0){\vector(-1,-1){6}}
\put(-14,23){$x_2$}
\put(23,-13){$x_1$}
\put(2,-14){$1/2$}
\put(-15,1){$1/2$}
\end{picture}}
}\caption{Non-uniqueness of the solution through the point
    $(1/2, 1/2)$.}
    \label{fig:non-unique}
\end{figure}


\section{Equilibria and lack of consensus}\label{sec:equilibria}


Generally speaking, a point $x^*$ is said to be an equilibrium of~\eqref{eq:dynamics-f} if the function $x(t)=x^*$ is a solution of~\eqref{eq:dynamics-f}.
Since we have different types of solution, we distinguish between different types of equilibria.

\smallskip 
We call
$x^*$  a {\em Carath\'eodory equilibrium} of~\eqref{eq:dynamics-f} if the function $x(t)=x^*$ is a Carath\'edodory solution of~\eqref{eq:dynamics-f}, and we call
$x^*$  a {\em Krasowskii equilibrium} of~\eqref{eq:dynamics-f} if the function $x(t)=x^*$ is a Krasowskii solution of~\eqref{eq:dynamics-f}.
Carath\'eodory equilibria are found by looking for solutions of the equation  $f(x^*)=0$, whereas  Krasowskii equilibria are points $x^*$  such that $0\in Kf(x^*)$.
If we denote by $E_C$ the set of Carath\'eodory equilibria of~\eqref{eq:dynamics-f} and by $E_K$ the set of Krasowskii equilibria, it is evident that $E_C\subset E_K$. As an example of a Krasowskii equilibrium that is not a Carath\'eodory equilibrium we can take the point $(1/2,1/2)^\top $ in Example~\ref{example:non-unique}. Actually points of the form $(h+1/2,...,h+1/2)^\top $ are always Krasowskii equilibria of~\eqref{eq:dynamics}.

\smallskip
Let us call {\em consensus point} a point $\overline x\in \real ^N$ such that $\overline x _i=\overline x_j$ for all $i,j=1,...,N$.
Clearly, integer consensus points like $x^*=k\1$ with $k\in \integer$ are Carath\'eodory equilibria  of~\eqref{eq:dynamics} because $f(k\1)=0$. Such a point $k\1$ belongs to the interior of $S_{k\1}$ and, consequently, is locally asymptotically stable.
However, Carath\'eodory equilibria are not necessarily consensus points and may belong to the discontinuity set.
\begin{example}[Non-consensus Carath\'eodory equilibrium] Consider the system
\begin{align}
\nonumber\dot x_1 =& q(x_3)-x_1\\
\nonumber\dot x_2 =& q(x_3)-x_2 \\
\nonumber\dot x_3=& q(x_1)+q(x_2)+q(x_4)-3x_3\\
\nonumber\dot x_4 =& q(x_3)+q(x_5)-2x_4\\
\nonumber\dot x_5 =& q(x_4)-x_5
\end{align}
The point $\bar x=(0,0,1/3,1/2,1)^\top $ is a Carath\'eodory equilibrium point which lies on the boundary of $S_{(0,0,0,1,1)}$. The set $S_{(0,0,0,1,1)}$ is contained in the attraction region of $\bar x$, but $\bar x$ is not Lyapunov stable. 
\end{example}
\noindent This example also shows that solutions {\em can converge to points that are not consensus points}.

For smooth systems,  if a (classical) solution converges to a point, then such point is a (classical) equilibrium. 
This property does not hold true for Carath\'eodory solutions of  systems with discontinuous right-hand side, and, in particular, in the case of systems of the form~\eqref{eq:dynamics-nn}. This motivates  the following definition of extended equilibrium.
Let ${\bf k}\in\integer$ and $f_{\bf k}$ the vector field whose components are
$$(f_{\bf k})_i(x)=\sum_{j\not =i}a_{ij}(k_j-x_i).$$ Clearly $f_{\bf k}$
coincides with $f$ on the set $S_{\bf k}$.
We call {\em extended equilibrium} of~\eqref{eq:dynamics}  a point $x^*\in \real ^N$ such that there exists ${\bf k}\in
\integer ^N$ such that $f_{\bf k}(x^*)=0$ and $x^*\in \overline{S_{\bf k}}$. We denote by $E_e$ the set of  extended equilibria of~\eqref{eq:dynamics}. 
It is evident that $E_C\subset E_e \subset E_K$.
The following example shows that these inclusions are strict in general.
\begin{example}[4-path graph: equilibria]\label{example:4-path}
Let us consider~\eqref{eq:dynamics} on a 4-node line graph:
\begin{align}\label{eq:4line}
\nonumber\dot x_1 =& \,q(x_2)-x_1\\
\dot x_2 =&\, q(x_1)+q(x_3)-2x_2 \\
\nonumber\dot x_3=& \,q(x_2)+q(x_4)-2x_3\\
\nonumber\dot x_4 =& \,q(x_3)-x_4
\end{align}
Then,
\begin{itemize}
\item
The point $x^A=(0,\frac12,\frac12,1)^\top $ is an extended equilibrium, because $x^A \in \overline{S_{(0,0,1,1)}}$ and $f _{(0,0,1,1)}(x^A)=0$. 
However, $x^A$ can not be a Carath\'eodory equilibrium, because $q(x^A_2)=1\neq x^A_1=0$.
\item The point $x^B=(\frac12,\frac12,\frac12,\frac12)^\top $ is a Krasowskii equilibrium, as it can be computed $0\in \K f(x^B)$. However, $x^B$ can not be an extended equilibrium, because $x^B_1=\frac12$ can not be equal to any quantizer value, so that the first component of the vector field can not be zero in any neighbourhood of the point.
\end{itemize}
\end{example}
This example shows that extended equilibria include points, like $x^A$, that are not consensus points. Interestingly, there exist Carath\'eodory solutions that asymptotically converge to $x^A$: it is enough to take the solution issuing from an initial condition in $S_{(0,0,1,1)}$.
In spite of being attractive, $x^A$ is {\em not} a Carath\'eodory equilibrium and actually Carath\'eodory solutions originating from $x^A$ converge to $(1,1,1,1)^\top $.
This pathological behaviour might appear surprising, but is allowed by the discontinuity of the vector field.


In the previous example one can find a Krasowskii solution that is not a Carath\'eodory solution: its slides on the discontinuity set and connects two equilibria. 
\begin{example}[4-path graph: Krasowskii trajectories]\label{example:4-path-contd}
Let us consider again the dynamics~\eqref{eq:4line} on the 4-node path graph.
Consider the parametrized segment $x_a=(a,\frac12,\frac12,1-a)^\top $ for $a\in [0,\frac12]$, which interpolates between the Krasowskii equilibria  $x_0=x^A$ and $x_{\frac12}=x^B$. For every $a\in (0,\frac12)$ it holds that 
$$ \K f(x_a)=\overline{\rm co} \left\{\right (1-a,0,1,a)^\top , (-a,0,0,a)^\top ,(-a,-1,0,-1+a)^\top , (1-a,-1,1,-1+a)^\top   \}.$$
Then there is a Krasowskii solution $\varphi (t)$ such that 
\begin{itemize}
\item{}  $\varphi(0)= x_B=(1/2,1/2,1/2,1/2)$, 
\item{}  for any $t>0$ there exists $a\in (0,1/2)$ such that 
 $\varphi (t)=x_a$, 
 \item{} if $\varphi (t)=x_a$ then   $\dot \varphi (t)=(-a,0,0,a)$
 \item{}  $\lim_{t\to +\infty}\varphi (t)=x_A=(0,1/2,1/2,1)$.
  \end{itemize}
 Note that $\varphi (t)$ can not be a Carath\'eodory solution as $f(x_a)= (1-a,0,1,a)$.
\end{example}
This example also shows that the minimum quantizer level $q_m(t)$ assumed along Krasowskii solutions may be decreasing (see Remark~\ref{rem:monotonicity}). In fact, if we compute $q_m(t)$ for the Krasowskii solution showed above, then  
$q_m(0)=1$ whereas $q_m(t)=0$ if $t>0$.

\subsection{Path graph: equilibria}
We now show, by means of an example, that equilibria can be significantly far from consensus: in the case the communication graph is a path, such distance can be proportional to the square of number of individuals.

In the case the graph is a path, equations~\eqref{eq:dynamics} read
\begin{align}\label{eq:line}
\dot x_1 =& q(x_2)-x_1 \nonumber\\
\dot x_i =& q(x_{i-1})+q(x_{i+1})-2x_i, \  i=2,\ldots ,N-1 \\
\dot x_N =& q(x_{N-1})-x_N \nonumber
\end{align}

\begin{proposition}\label{prop:q-equil-path}
If $x^* $ is an extended equilibrium, then
$$\max\setdef{ |x^*_j-x^*_i|}{ i,j=1,\ldots,N}\leq \frac{(N-2)^2}{4}.$$
Furthermore, there exists an extended equilibrium $x^* $ such that 
$$x^*_N-x^*_1=\begin{cases}
\frac{(N-2)^2}{4} &\text{if $N$ is even}\\
\frac{(N-1)(N-3)}{4} &\text{if $N$ is odd.}
\end{cases}$$

\end{proposition}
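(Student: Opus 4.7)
The approach is to reduce the statement to a discrete optimisation over integer sequences encoding admissible quantization profiles.

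First, I would characterize the extended equilibria for the path. Solving $f_{\mathbf{k}}(x^*) = 0$ component by component in~\eqref{eq:line} gives $x_1^* = k_2$, $x_N^* = k_{N-1}$, and $x_i^* = (k_{i-1} + k_{i+1})/2$ for $2 \le i \le N-1$. Imposing $x^* \in \overline{S_{\mathbf{k}}}$, i.e.\ $|x_i^* - k_i| \le 1/2$ for every $i$, forces $|k_2 - k_1| \le 1/2$ and $|k_{N-1} - k_N| \le 1/2$ at the endpoints; since the $k_j$ are integers, this collapses to $k_1 = k_2$ and $k_{N-1} = k_N$. In the interior the constraints read $|k_{i-1} - 2 k_i + k_{i+1}| \le 1$.

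Next, I would introduce the first differences $D_l := k_{l+1} - k_l$ for $l = 1,\dots, N-1$, which satisfy $D_1 = D_{N-1} = 0$ and $|D_l - D_{l-1}| \le 1$. A telescoping induction from either endpoint yields $|D_l| \le \min(l-1, N-1-l)$. Since $x_1^* = k_2$, $x_N^* = k_{N-1}$, and each interior $x_i^*$ is the midpoint of $k_{i-1}$ and $k_{i+1}$, every component of $x^*$ lies in $[\min_j k_j, \max_j k_j]$. Therefore
$$\max_{i,j} |x_i^* - x_j^*| \,\le\, \max_j k_j - \min_j k_j \,\le\, \sum_{l=1}^{N-1} |D_l| \,\le\, \sum_{l=1}^{N-1} \min(l-1, N-1-l).$$
A parity case split then shows that the rightmost sum equals $(N-2)^2/4$ for even $N$ and $(N-1)(N-3)/4 \le (N-2)^2/4$ for odd $N$, establishing the first inequality.

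For tightness, I would take the extremal triangular choice $D_l = \min(l-1, N-1-l)$, whose consecutive terms differ by exactly $\pm 1$ (or by $0$ at the single peak when $N$ is odd), so the admissibility constraints all hold. Setting $k_i := \sum_{l=1}^{i-1} D_l$ and defining $x^*$ via the characterization produces an extended equilibrium, because $x_i^* - k_i = (D_i - D_{i-1})/2$ has absolute value at most $1/2$ by construction. Since $D_l \ge 0$, the sequence $(k_i)$ is nondecreasing, so $x_N^* - x_1^* = k_{N-1} - k_2 = \sum_{l=1}^{N-1} D_l$ (using $D_1 = D_{N-1} = 0$), and summing the series for each parity recovers the two claimed values. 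The main task is the careful handling of the boundary conditions $k_1 = k_2$ and $k_{N-1} = k_N$, which reduce the effective degrees of freedom in $(D_l)$, together with the parity-dependent evaluation of the sums; once the problem is translated into the language of admissible difference sequences there is no further conceptual obstacle.
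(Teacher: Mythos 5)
Your proposal is correct and follows essentially the same route as the paper's proof: characterize extended equilibria via $f_{\mathbf{k}}(x^*)=0$ with $x^*\in\overline{S_{\mathbf{k}}}$, reduce to the integer increments $k_{i+1}-k_i$ with zero boundary values and second differences bounded by $1$, deduce $|k_{i+1}-k_i|\le\min(i-1,N-1-i)$, sum by parity, and realize tightness with the triangular increment profile. Your reformulation in terms of the difference sequence $D_l$ is only a notational repackaging of the paper's argument, so there is nothing substantive to add.
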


\begin{proof}
Let us characterize extended equilibria of~\eqref{eq:line}.
Let ${\bf k}\in \integer^N$. If $x^*$ is such that $f_{\bf k}(x^*)=0$
then 
\begin{align*}
x^*_1= & k_2\\
x^*_i=& \frac{k_{i-1}+k_{i+1}}{2},\  i=2,\ldots ,N-1\\
x^*_N= & k_{N-1}.
\end{align*}
Then, $x^*\in \overline S_{\bf k}$ if and only if the following inequalities
are satisfied 
\begin{align*}
k_1-\frac{1}{2}\leq &  k_2\leq k_1-\frac{1}{2}\\
k_i-\frac{1}{2}\leq & \frac{k_{i-1}+k_{i+1}}{2}\leq k_i+\frac{1}{2},\ i\in\{ 2,\ldots ,N \}\\
k_N-\frac{1}{2}\leq & k_{N-1}\leq k_N+\frac{1}{2}
\end{align*}
Since $k_i\in \integer$,  from the first inequality it follows that
$k_1=k_2$ and from the last one it follows that $k_N=k_{N-1}$. 
Moreover the inequalities  with $i=2,\ldots , N-1$ can be written in the
 form
$$k_i-k_{i-1}-1\leq k_{i+1}-k_i\leq k_i-k_{i-1}+1$$ 
that implies
$$|k_{i+1}-k_i|-|k_{i}-k_{i-1}|\leq |(k_{i+1}-k_i)-(k_{i}-k_{i-1})|\leq 1$$ 
i.e.
$$|k_{i+1}-k_i|\leq |k_i-k_{i-1}|+1 \ \text{and} \   |k_{i}-k_{i-1}|\geq |k_{i+1}-k_{i}|-1. $$
From the left inequality  we deduce that 
$$|k_i-k_{i-1}|\leq i-2,\  i\in\{ 2,\ldots ,N \}$$
and from the right  inequality
$$|k_i-k_{i-1}|\leq N-i,\  i\in\{ 2,\ldots ,N \}.$$
These two inequalities imply that
$$|k_i-k_{i-1}|\leq \min \{ i-2, N-i\},\   i\in\{ 2,\ldots ,N \},$$
that can be written as
$$|k_i-k_{i-1}|\leq  i-2    , \   i\in\{ 2,\ldots ,N\}, i\leq \frac{N+2}{2}$$
and 
$$|k_i-k_{i-1}|\leq  N-i , \   i\in\{ 2,\ldots ,N \}, i> \frac{N+2}{2}.$$
\smallskip
Assuming without loss of generality that $j\geq i$, we can deduce that 
$$|k_j-k_i|\leq
|k_{j}-k_{j-1}|+|k_{j-1}-k_{j-2}|+\ldots +|k_{i+1}-k_i|\leq |k_N-k_{N-1}|+\ldots +|k_2-k_1|$$
and also 
$$|x_j-x_i|\leq |k_N-k_{N-1}|+\ldots +|k_2-k_1|$$

\noindent
Let us distinguish the cases $N$ is even and $N$ is odd.
If $N$ is even, $N=2n, n\in \natural$, and $\frac{N+2}{2}=n+1$. We have 
\begin{align*}
& |k_N-k_{N-1}|+|k_{N-1}-k_{N-2}|+\ldots+|k_{n+1}-k_{n}|+\ldots+|k_2-k_1|\\
&\leq  \big(0+1+2+\ldots+(2n-n-2)\big)+\big((n+1-2)+(n-2)+\ldots+1+0\big)\\
&\leq \frac{(n-2)(n-1)}{2}+\frac{(n-1)n}{2}=(n-1)^2=\frac{(N-2)^2}{4}.  
\end{align*}
If $N$ is odd, $N=2m+1, m\in \natural$, and $\frac{N+2}{2}=m+\frac{3}{2}$. We have 
\begin{align*}
& |k_N-k_{N-1}|+|k_{N-1}-k_{N-2}|+\ldots |k_{n+1}-k_{n}|+\ldots .+|k_2-k_1|\\
& \leq   \big(0+1+2+\ldots+(2m+1-m-2)\big)+\big(m+1-2)+(m-2)+\ldots+1+0\big)\\
& \leq  2\frac{(m-1)m}{2}=(m-1)m=\frac{(N-1)(N-3)}{4}.  
\end{align*}
The first statement then follows from the fact $(N-1)(N-3)\leq (N-2)^2$.

On the other hand, if we select $\bf k$ such that $k_1=0$ and 
$$k_i-k_{i-1}= \begin{cases} i-2    , \   i\in\{ 2,\ldots ,N\}, i\leq \frac{N+2}{2}\\
N-i , \   i\in\{ 2,\ldots ,N \}, i> \frac{N+2}{2},\end{cases}$$
we do obtain an extended equilibrium $x^*$ that achieves the above bounds.
\end{proof}


This example shows that the system, provided $N$ is large enough, has extended equilibria that are arbitrarily far from the consensus. Figure~\ref{fig:line-convergence} shows convergence to a non-consensus state. 

\begin{figure}[htb]
\centering
\psfrag{xct}{$x$}
\psfrag{time}{$t$}
\includegraphics[width=0.6\columnwidth]{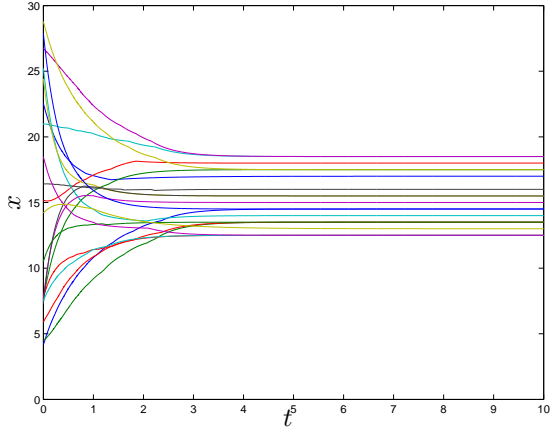}
\caption{Simulation via explicit Euler method of~\eqref{eq:line} for $N=20$ from random initial conditions in $[0,30]$, showing convergence to a non-consensus state.}
\label{fig:line-convergence}
\end{figure}

\section{Quantization as disturbance}\label{sec:quant-dist}

A classical approach to obtain (conservative) results about quantized systems is to see the quantized dynamics as the perturbation of a ``nominal'' system, where the perturbation is due to the quantization error $q(x)-x$. This idea is also useful for our problem. In this section we get an asymptotic estimate of the distance of solutions from consensus. In order to  prove it we need to assume the graph to be weight balanced. Below, we recall a few concepts of graph theory needed to make use of this assumption.

We have already defined  $d_i=\sum_{j=1}^N a_{ij}$. Let $d^i=\sum_{k=1}^N
a_{ki}$.  The (directed weighted) graph with adjacency matrix $A$ is said to be {\em weight-balanced} if  $d_i=d^i$ for all $i=1,...,N$. 
Note that
$L\1=0$ and $\1^\top L=0$ if and only if the graph is weight-balanced. 
Given an edge
$(i,j)$, we shall refer to $i$ and to $j$ as the tail and the head
of the edge, respectively. A path is an ordered list of
edges such that the head of each edge is equal to the tail of the
following one. The graph is said to be 
\begin{enumerate}\item {\em strongly connected} if
for any $i,j$ there is an path from $i$ to $j$;
\item {\em connected} if there exists one node $j$ such that for any $i$ there is path from $i$ to $j$;
\item {\em weakly connected} if for each pair
of nodes $i,j$, one can construct a path which connects $i$ and $j$ by possibly reverting the direction of some edges.
\end{enumerate}
These three notions of connectedness coincide when the graph is {\em symmetric}, that is when $a_{ij}>0$ if and only if $a_{ji}>0$. On the contrary, weakly connected weight-balanced graphs are strongly connected~\cite[Proposition~2]{JC:08}. We recall the following
result, which can be derived from~\cite[Theorem~1.37]{FB-JC-SM:09} and
\cite[Formula~(1)~and~Section~2.2]{JC:08}.

\begin{lemma}\label{lemma:t1}
Let $L$ be the Laplacian matrix of a weight-balanced and
weakly connected graph. Then:
\begin{itemize}
\item[(i)] The matrix ${\rm Sym}(L)=\frac{L+L^\top }{2}$ is positive semi-definite.
\item[(ii)] Denoted by $\lambda_*$ the smallest
non-zero eigenvalue of ${\rm Sym}(L)$,
$$
x^\top  L x \ge \lambda_*
\| x- x_a \1 \|  ^2\;, $$ for all $x\in \real^N$.
\end{itemize}
\end{lemma}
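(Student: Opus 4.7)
The plan is to recognize that the symmetric part ${\rm Sym}(L)$ is itself the Laplacian of an undirected graph obtained by symmetrizing the original one, and to reduce the quadratic form $x^\top L x$ to a Rayleigh quotient on the orthogonal complement of $\1$.

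For part (i), I would first observe that $\mathrm{Sym}(L)=\tfrac12(L+L^\top)=\tfrac12(D+D^\top)-\tfrac12(A+A^\top)=D-\tilde A$, where $\tilde A=\tfrac12(A+A^\top)$ has non-negative entries and $D=\mathrm{diag}(d_i)$. Weight-balancedness, $d_i=d^i$, ensures that $D$ is exactly the degree matrix of the symmetrized graph with weights $\tilde a_{ij}$, so $\mathrm{Sym}(L)$ is the Laplacian of an undirected graph with non-negative edge weights. Hence the standard identity
\[
x^\top \mathrm{Sym}(L) x=\tfrac12\sum_{i,j}\tilde a_{ij}(x_i-x_j)^2\ge 0
\]
gives positive semi-definiteness immediately.

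For part (ii), the main tool is the simple identity $x^\top L x=x^\top\mathrm{Sym}(L)x$, valid for any real matrix. Next I would argue that $\ker\mathrm{Sym}(L)=\mathrm{span}(\1)$. Since the graph is weakly connected and weight-balanced, it is strongly connected (as stated in the excerpt, citing \cite{JC:08}), and then the symmetrized graph with weights $\tilde a_{ij}=\tfrac12(a_{ij}+a_{ji})$ is connected (as an undirected graph). For a connected undirected weighted graph the Laplacian has a one-dimensional kernel spanned by $\1$; this follows from the sum-of-squares formula above, which forces $x_i=x_j$ on every edge of the symmetrized graph.

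Given this kernel characterization, I would decompose $x=x_a\1+(x-x_a\1)$, where by construction $(x-x_a\1)\perp\1$. Since $\1\in\ker\mathrm{Sym}(L)$,
\[
x^\top L x=x^\top\mathrm{Sym}(L)x=(x-x_a\1)^\top\mathrm{Sym}(L)(x-x_a\1),
\]
and the Rayleigh quotient bound applied on $\1^\perp$ gives the desired inequality
\[
x^\top L x\ge \lambda_*\|x-x_a\1\|^2.
\]

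I do not foresee a real obstacle here; both parts amount to translating the hypothesis of weight-balancedness into the statement that $\mathrm{Sym}(L)$ behaves like an ordinary undirected Laplacian. The only subtle point is to justify that the kernel of $\mathrm{Sym}(L)$ is genuinely one-dimensional, which requires that strong connectedness of the original digraph be promoted to connectedness of the symmetrized graph — a one-line observation since symmetrizing only adds edges.
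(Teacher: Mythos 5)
Your proof is correct. The paper does not prove this lemma itself but cites standard references, and your argument (weight-balancedness makes ${\rm Sym}(L)$ the Laplacian of the symmetrized undirected graph, whose connectedness forces $\ker {\rm Sym}(L)=\mathrm{span}(\1)$, after which the Rayleigh-quotient bound on $\1^\perp$ gives the inequality) is precisely the standard derivation those references supply, including the correct identification of the kernel characterization as the one point that genuinely needs the connectivity hypothesis.
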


\begin{theorem}[Convergence to a set]\label{prop:convergence-to-M}  Assume that the graph with adjacency matrix $A$ is weight balanced and weakly connected.
If $x(t)$ is any Carath\'eodory or Krasowskii solution of~\eqref{eq:dynamics} and
$$M=\left\{ x\in\real ^N: \  \inf _{\alpha\in \real } \| x-\alpha \1
\|\leq \frac{||A||}{\lambda_*}\frac{\sqrt{N}}{2}\right\},
$$
then ${\dist}(x(t),M)\to 0$ as $t\to +\infty$.
\end{theorem}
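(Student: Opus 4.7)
My plan is to use the Lyapunov function $V(x) = \tfrac12 \|x - x_a\1\|^2$, where $x_a = \tfrac{1}{N}\1^\top x$, and to view \eqref{eq:dynamics-L} as the nominal consensus flow $\dot x = -L(x - x_a\1)$ perturbed by the quantization error $A(\Q(x) - x)$. Along any Carath\'eodory solution of \eqref{eq:dynamics}, the chain rule together with the orthogonality $\1^\top(x - x_a\1) = 0$ give $\dot V = (x - x_a\1)^\top \dot x$. Using $\1^\top L = 0$ (weight-balancedness) one obtains $(x - x_a\1)^\top L x = x^\top L x$, so Lemma~\ref{lemma:t1}(ii) yields $(x - x_a\1)^\top L x \ge \lambda_* \phi^2$, where $\phi(t) := \|x(t) - x_a(t)\1\|$. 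For the perturbation I would use the elementary fact $|q(s) - s| \le 1/2$ for every $s \in \real$, hence $\|\Q(x) - x\| \le \tfrac{\sqrt N}{2}$, and Cauchy--Schwarz to bound $(x-x_a\1)^\top A(\Q(x) - x) \le \|A\|\tfrac{\sqrt N}{2}\phi$. Combining the two estimates,
\be\label{eq:plan-key}
\dot V(x(t)) \le \lambda_* \phi(t) \bigl( R - \phi(t)\bigr), \qquad R := \frac{\|A\|\sqrt N}{2\lambda_*},
\ee
holds almost everywhere.

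To extend this inequality to Krasowskii solutions, I would use that $V \in C^1$ and $x$ is absolutely continuous, so that $\frac{d}{dt}V(x(t)) = (x(t) - x_a(t)\1)^\top v(t)$ for a measurable selection $v(t) \in \K f(x(t))$, a.e. Since $-Lx - Ax$ is continuous in $x$ and $A$ is linear, $\K f(x) = -Lx - Ax + A\,\K\Q(x)$; moreover each component of any selection in $\K\Q(x)$ is a convex combination of quantizer values at nearby points and therefore still lies within $1/2$ of the corresponding component of $x$. Hence \eqref{eq:plan-key} holds verbatim along Krasowskii solutions as well.

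With \eqref{eq:plan-key} in hand, I would deduce convergence to $M$ via a comparison argument applied to $W(t) := \max\bigl\{0, V(x(t)) - \tfrac12 R^2\bigr\}$. Whenever $V(x(t)) > \tfrac12 R^2$, equivalently $\phi(t) > R$, the right-hand side of \eqref{eq:plan-key} is strictly negative; when $V(x(t)) \le \tfrac12 R^2$, $W(t)$ vanishes. Thus $W$ is non-increasing and admits a limit $W_\infty \ge 0$. If $W_\infty > 0$, then $\phi(t)^2 \ge R^2 + 2 W_\infty$ for all $t$, which via \eqref{eq:plan-key} provides a uniform negative upper bound on $\dot V$, contradicting $V \ge 0$. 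Hence $W_\infty = 0$. Since $M$ is the closed tube of radius $R$ around the consensus line $\real\1$ and $x_a\1$ realizes $\inf_\alpha \|x - \alpha\1\|$, one has $\dist(x,M) = \max\{0, \phi(x) - R\}$; so $W(t) \to 0$ is equivalent to $\dist(x(t), M) \to 0$.

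The main technical point I expect to require care is the rigorous justification of \eqref{eq:plan-key} for Krasowskii solutions, namely the identification of $\K f$ and the componentwise bound on arbitrary selections in $\K\Q(x)$; once that is secured, the remainder is a routine consensus-plus-bounded-disturbance Lyapunov argument.
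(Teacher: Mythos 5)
Your proposal is correct and follows essentially the same route as the paper's proof: the same Lyapunov function $V=\tfrac12\|x-x_a\1\|^2$, the same decomposition of~\eqref{eq:dynamics-L} into the nominal consensus flow plus a quantization disturbance bounded in norm by $\tfrac{\sqrt N}{2}$ (valid also for selections from $\K\Q(x)$), and the same application of Lemma~\ref{lemma:t1}(ii) to obtain $\dot V\leq \|x-x_a\1\|\bigl(-\lambda_*\|x-x_a\1\|+\|A\|\tfrac{\sqrt N}{2}\bigr)$. Your closing comparison argument via $W=\max\{0,\,V-\tfrac12 R^2\}$ merely makes explicit the final step that the paper dispatches in one sentence.
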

\begin{proof} We prove the statement for Krasowskii solutions, as Carath\'eodory solutions are a special case.
Let $y(t)= x(t) -x_a(t)\1$. Then
$\dot y(t) = \dot x(t) - \dot x_a(t)\1 .$
Consider the function $V(y)=\frac{1}{2}y^\top y$. We have that 
\begin{align*}
\nabla V(y)\cdot \dot y
=& y^\top \dot y(t)\\
=& (x -x_a\1 )^\top [\dot x-\dot x_a\1]\\
=& (x -x_a\1 )^\top \dot x-x^\top \dot x_a\1+ x_a\1^\top \dot x_a\1\\
=& (x -x_a\1 )^\top \dot x-\dot x_ax^\top \1+ \dot x_aN  x_a\\
=& (x -x_a\1 )^\top \dot x-\dot x_aNx_a+ \dot x_aN  x_a\\
=& (x-x_a\1)^\top \dot x.
\end{align*}
From~\eqref{eq:dynamics-L} we get
$$\dot x \in  -L(x-x_a\1)+A\K(\Q (x)-x)\subseteq  -L(x-x_a\1)+A(\K\Q (x)-x).$$
For any $v\in \K\Q (x)-x$, it holds $\| v\|\leq \frac{\sqrt N}{2}$. Then, if $v\in \K\Q (x)-x$ is such that $\dot y=-L(x-x_a\1)+Av$, we have
\begin{align*}
\nabla V(y)\cdot \dot y
=& (x -x_a\1 )^\top [-L(x-x_a\1)+Av]\\
=& -(x -x_a\1 )^\top L(x-x_a\1)+(x -x_a\1 )^\top Av\\
=& -(x -x_a\1 )^\top {\rm Sym}(L) (x-x_a\1)+(x -x_a\1 )^\top Av\\
\leq& -\lambda_* \|x -x_a\1 \| ^2+ \|x -x_a\1 \| \| A\|\frac{\sqrt
N}{2}\\
\leq& \|x -x_a\1 \|\left[ -\lambda_* \|x -x_a\1 \| + \| A\|\frac{\sqrt
N}{2}\right].
\end{align*}
We conclude that ${\dist}(x(t),M)\to 0$ as $t\to +\infty$.
\end{proof}


Note that the infimum $\inf_{\alpha\in \real } \| x-\alpha \1\|$ is in fact achieved for $\alpha=\frac1N\sum_{i=1}^N x_i.$
Then we can further elaborate that 
$$M=\left\{ x\in\real ^N: \   \frac1{\sqrt{N}}\| x- \frac1N\sum_{i=1}^N x_i \1
\|\leq \frac{||A||}{2\lambda_*}\right\}.$$
 In order to better understand the interest of this overapproximation of the limit set, we specialize itt to two families of graphs with uniform weights: complete and path graphs with $a_{ij}=1$ if $i$ and $j$ are connected  and $a_{ij}=0$ if they are not.

\begin{example}[Complete graph] On complete graphs, we have 
$$\dot x_i=\sum_{j\not =i} \big(q(x_j)-x_i\big)\quad \text{for all $i$},$$ while $||A||=N-1$ and $\lambda_*=N$, so that asymptotically $$\frac1{\sqrt{N}}\| x- \frac1N\sum_{i=1}^N x_i \1
\|\leq \frac12.$$
This bound implies that the limit points are close to consensus. In fact, we shall prove below (Theorem~\ref{conv_compl}) that in this case limit points are precisely consensus points.
\end{example}

\begin{example}[Path graph]
On path graphs with $N$ nodes, $\lambda_*=1-\cos(\frac\pi{N})$  and $||A||\le 2$, because it always holds by Gershgorin's disk lemma that $||A||\le d_{\max}$. Then, the set $M$ is defined by 
\begin{align*}
\frac1{\sqrt{N}}\| x- \frac1N\sum_{i=1}^N x_i \1\|&
\leq \frac{2}{2 (1-\cos\frac\pi{N})}\\&=
\frac1{\frac{\pi^2}{2 N^2}-\frac{\pi^4}{4 N^4}+ o(\frac1{N^4})}
\\&=
\frac2{\pi^2} \frac{N^2 }{1-\frac{\pi^2}{2 N^2} + o(\frac1{N^2})}=\frac2{\pi^2} N^2 + O(1) \quad \text{as $N\to\infty$}.
\end{align*}
Let us now consider the following equilibrium $x^*\in E_e$ for odd $N=2m +1$, which was constructed in the proof of Proposition~\ref{prop:q-equil-path}: the point $x^*$ is symmetric with respect to the median value $x^*_{m+1}$ and such that
\begin{align*}
x^*_1&=0\\
x^*_i&=\frac{i^2-3i+3}2 \quad \text{if $2\le i\le m$}\\
x^*_{m+1}&=\frac{m(m-1)}2.
\end{align*}
For instance, for $N=9$ we have $\mathbf k=(0, 0, 1, 3, 6, 9, 11, 12, 12)^\top$ and correspondingly $x^*=(0,\frac12,\frac32,\frac72,6,\frac{17}2,\frac{21}2,\frac{23}2,12)^\top.$
Consequently, $x^*_a=x^*_{m+1}$ and
\begin{align*}
\| x^*-x^*_a\|^2=&
2\sum_{i=1}^{m} (x^*_i-x^*_{m+1})^2\\=& 
2\sum_{i=1}^{m} \left(\frac{i^2-3i +3}2-\frac{m(m-1)}2\right)^2\\=& 
\frac12\sum_{i=1}^{m} \big(i^4-6 i^3 - (2m^2-2m-15) i^2+ 6 (m^2-m-3) i \\&\qquad\qquad\qquad\qquad\qquad\qquad+ m^4-2m^3-5m^2+6m+9\big).
\end{align*}
By recalling that for all $a\in \natural$, $$\sum_{i=1}^n i^a= \frac1{a+1}{n^{a+1}}+ o(n^{a+1}) \quad \text{as $n\to\infty$}$$ and identifying the highest order terms in the above expression, we finally observe that 
\begin{align*}
\frac1{\sqrt{N}}\| x^*-x^*_a\|= \frac1{\sqrt{120}}\ N^2 + o(N^2) \quad \text{as $N\to\infty$}.
\end{align*}
\end{example}
This example shows that the result in Proposition~\ref{prop:convergence-to-M} can not be significantly improved in general, as the estimate on the limit set is asymptotically tight for large networks in the sense of the Euclidean distance from the consensus.
Nevertheless, stronger results can be obtained for specific topologies, as we do in the following section.

\section{Convergence to consensus on special graphs}\label{sec:consensus}
The previous sections emphasize the fact that in general we can not expect consensus for a system of the form~\eqref{eq:dynamics}. Nevertheless we can prove that consensus is achieved when the communication graph has some particular form, namely if it is complete or bipartite complete, and weights are uniform. 

\subsection{Complete graph}\label{sect:complete}

System~\eqref{eq:dynamics} in the case of the complete graph with uniform weights read 
\be
\label{eq:complete}
\dot x=f(x)
\ee
where $\map{f}{\real ^N}{\real ^N}$ is given by $f_i(x)=\sum_{j\not =i}[q(x_j)-x_i]$.
We define
$$
\begin{array}{rl} \q (x)& =(\sum_{j\not =1}q(x_j),\ldots , \sum_{j\not =N}q(x_j))^\top\\& =(\sum_{j =1}^Nq(x_j)-q(x_1),\ldots , \sum_{j=1}^Nq(x_j)-q(x_N) )^\top ,
\end{array} 
$$
and we observe that for every  ${\bf k}\in \integer^N$ the function $\q$ is constant on each set 
$S_{{\bf k}}$.
The discontinuous vector field can be written as 
\be
\label{fq} 
f(x)=\q(x)- (N-1)x.
\ee
This form makes evident  that in each set $S_{{\bf k}}$
trajectories are line segments.

We now prove that the set $E_e$ of extended equilibria reduces to quantized consensus points which actually  are Carath\'eodory equilibria as they belong to the interior of some $S_{\bf k}$.
\begin{proposition}[Extended equilibria for complete graph]
The sets of Carath\'eodory and extended equilibria of~\eqref{eq:complete} coincide and are equal to 
$$E_C=E_e=\{ x\in \integer ^N: \  \exists h\in \integer\text{ such that }\   x_i=h \  \forall i=1,\ldots ,N\}.$$ 
\end{proposition}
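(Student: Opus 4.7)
The plan is to directly characterize $E_e$ by explicitly solving the equation $f_\mathbf{k}(x^*)=0$ and then imposing the constraint $x^*\in\overline{S_\mathbf{k}}$. The inclusion $E_C\subset E_e$ is already known in general, so it suffices to show that every extended equilibrium is an integer consensus point $k\1$, and to note that such points lie in the interior of $S_{k\1}$ and therefore are Carath\'eodory equilibria.

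For the complete graph with unit weights, the vector field $f_\mathbf{k}$ has components $(f_\mathbf{k})_i(x)=\sum_{j\neq i}k_j - (N-1)x_i$. Setting $f_\mathbf{k}(x^*)=0$ and writing $K=\sum_j k_j$ immediately gives
$$x^*_i = \frac{K-k_i}{N-1}, \qquad i=1,\dots,N.$$
The second step is to impose $x^*\in\overline{S_\mathbf{k}}$, namely $k_i-\tfrac12\le x^*_i\le k_i+\tfrac12$ for every $i$. A routine manipulation rewrites this as
$$Nk_i - \tfrac{N-1}{2}\ \le\ K\ \le\ Nk_i + \tfrac{N-1}{2},$$
or equivalently, dividing by $N$,
$$\frac{K}{N}-\frac{N-1}{2N}\ \le\ k_i\ \le\ \frac{K}{N}+\frac{N-1}{2N}.$$

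The decisive (and, I expect, easiest) step is a pigeonhole observation: the interval above has length $\frac{N-1}{N}<1$, and all $k_i$ are integers lying inside it, so they must be mutually equal. Calling the common value $k$, we get $K=Nk$ and thus $x^*_i=(Nk-k)/(N-1)=k$ for every $i$, i.e.\ $x^*=k\1$. To finish, observe that $k\1$ lies strictly in the interior of $S_{k\1}$, so $f(k\1)=0$ and $k\1\in E_C$. Combining with the a priori inclusion $E_C\subset E_e$ yields the stated equality, and the explicit form of the common set is exactly $\{x\in\integer^N : \exists h\in\integer,\ x_i=h\ \forall i\}$.

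The main obstacle is not really an obstacle; the only delicate point is ensuring the interval-length computation $\frac{N-1}{N}<1$ is strict (which requires $N\ge 2$, a case that is the only one of interest here). Everything else is linear algebra and bookkeeping.
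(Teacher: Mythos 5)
Your argument is correct and follows essentially the same route as the paper's proof: solve $f_{\mathbf k}(x^*)=0$ to get $x^*_i=(K-k_i)/(N-1)$, translate $x^*\in\overline{S_{\mathbf k}}$ into $\frac{K}{N}-\frac{N-1}{2N}\le k_i\le \frac{K}{N}+\frac{N-1}{2N}$, and conclude from the interval's length $\frac{N-1}{N}<1$ that all $k_i$ coincide, whence $x^*=k\mathbf 1\in E_C$. No gaps; the pigeonhole step you single out is exactly the one the paper uses.
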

\smallskip

\begin{proof} Clearly, any point of $E_e$ is a Carath\'eodory equilibrium of~\eqref{eq:complete}.
On the other hand $x^*$ is an extended equilibrium if 
there exists ${\bf k}\in \integer ^N$  such that $f_{\bf k}(x^*)=0$ and  $x^*$ belongs to $ \overline{S_{{\bf k}}}$. If 
$K=\sum_{j=1}^Nk_j$, $x^*$ is a partial equilibrium if for all $i=1,\ldots,N$
$$x^*_i=\frac{K-k_i}{N-1}$$
and
$$k_i-\frac 12\leq x^*_i \leq k_i+\frac 12$$
By multiplying these bounds by $N-1$, we get 
$$(N-1)\left(  k_i-\frac 12\right)\leq K-k_i\leq (N-1)\left( k_i+\frac 12\right)$$
that implies, for all $i=1,\ldots,N$,
$$\frac{K}{N}-\frac{N-1}{2N}\leq k_i\leq \frac{K}{N}+\frac{N-1}{2N}.  $$
Since $k_i\in \integer $
interval  $\left[ \frac{K}{N}-\frac{N-1}{2N}, \frac{K}{N}+\frac{N-1}{2N}\right]$, and
 these bounds imply that $k_1=\ldots =k_n$. 
We conclude that if $x^*$ is an extended equilibrium then $x^*\in S_{(h,\ldots ,h)}$
for some $h\in \integer $. In this case $f(x^*)_i=(N-1)h-(N-1)x^*_i$
for all $i=1,\ldots ,N$, and finally $x^*_i=h$ for all $i=1,\ldots ,N$.
\end{proof}

We remark that also for~\eqref{eq:complete} there may be Krasowskii equilibria which are not extended equilibria: see, for instance, the initial condition in Example~\ref{example:non-unique}.
However, we will shortly see that all Krasowskii equilibria are consensus points.

\begin{remark}[Local stability]
Note that extended equilibria of~\eqref{eq:complete} are Lyapunov stable and locally asymptotically stable. In fact for any equilibrium
$x^*$ there exists a neighborhood where the  vector field can be written as
$f(x)=- (N-1)(x-x^*)$.
\end{remark}

\begin{remark}[Finite-time exit for Carath\'eodory solutions]\label{rem:finitetime}
From the proof of the previous proposition it
follows that in each $S_{{\bf k}}$ with ${\bf k}\not =(h,\ldots ,h)^\top $ for
any $h\in \integer$, trajectories are line segments that are generated by Carath\'eodory solutions corresponding  to a
vector field whose equilibrium is out of $S_{{\bf k}}$. This consideration implies that all solutions to~\eqref{eq:dynamics} escape from every set $S_{{\bf k}}$ with ${\bf k}\not =(h,\ldots ,h)^\top $ in finite time. 
On the other hand, once inside some $S_{(h,h,\ldots ,h)}$,
the equilibrium is reached asymptotically.
\end{remark}

We now state the main result of this section: any Carath\'eodory or Krasowskii solution converges to a consensus point. The convergence is illustrated in Figure~\ref{fig:complete-convergence-complete}.

\begin{figure}[htb]
\centering
\psfrag{xct}{$x$}
\psfrag{time}{$t$}
\hspace{-0.02\columnwidth}\includegraphics[width=0.53\columnwidth]{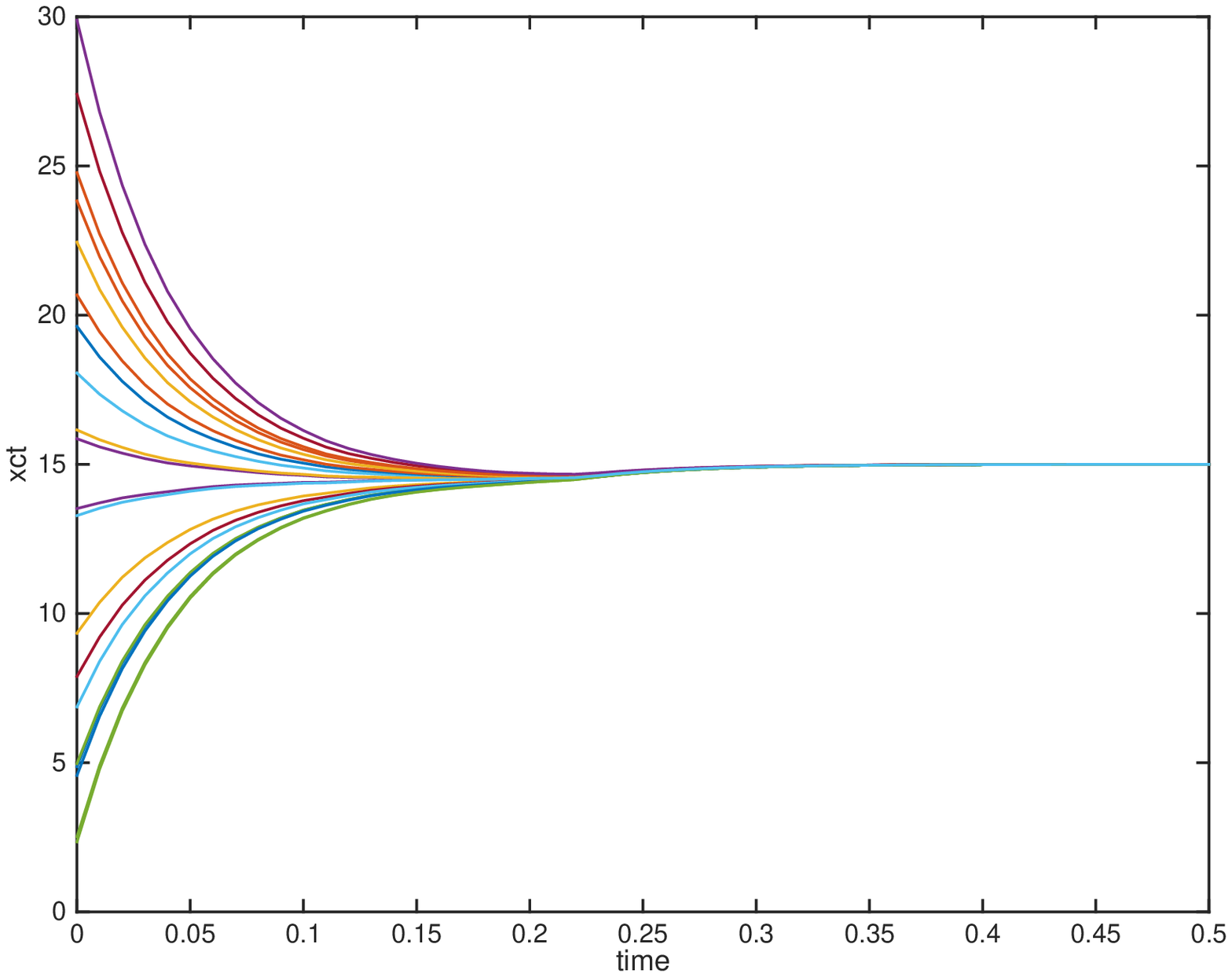}\hspace{-0.05\columnwidth}
\includegraphics[width=0.53\columnwidth]{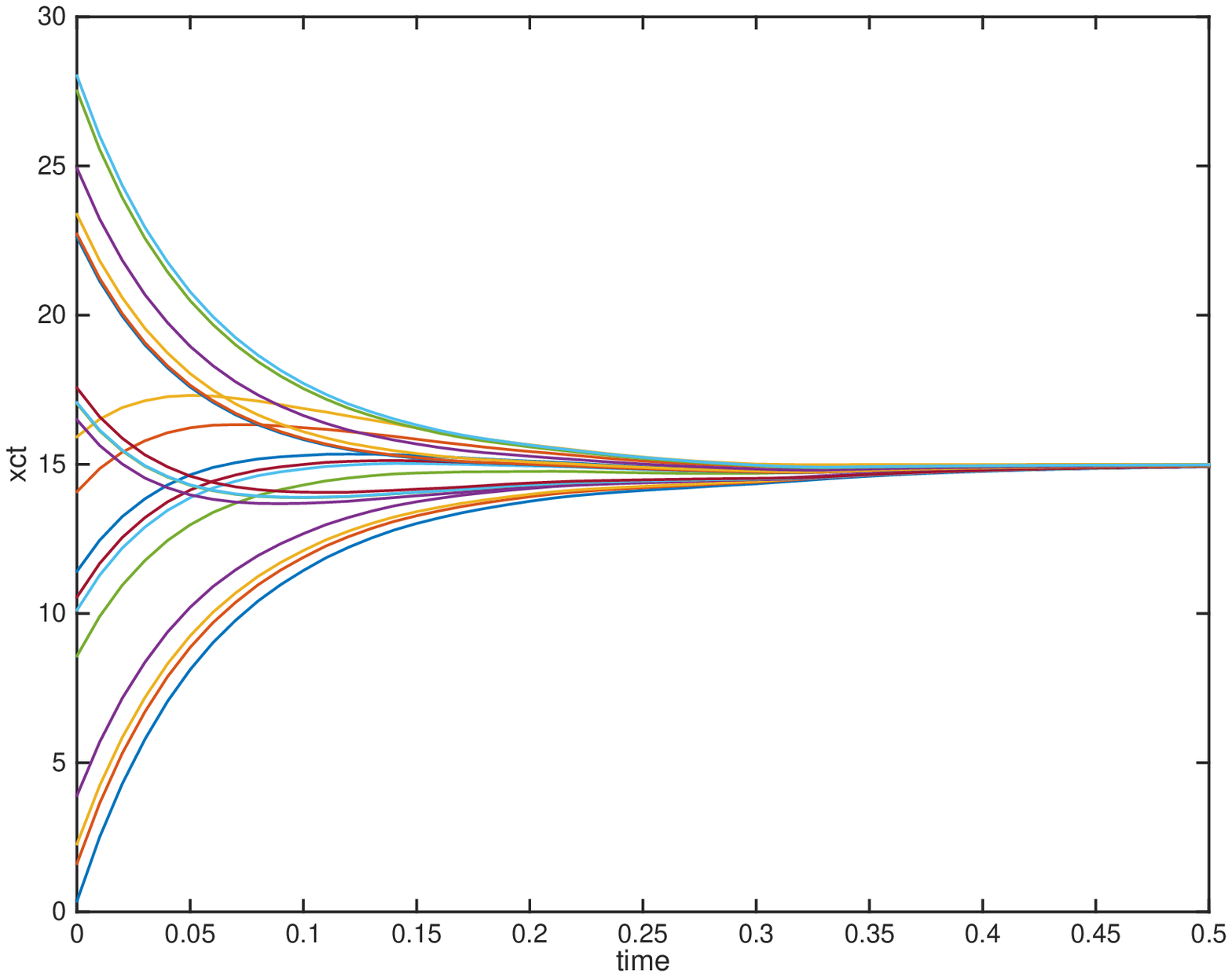}
\caption{Simulations of~\eqref{eq:complete} and~\eqref{eq:complete_bipartite} (respectively, left and right plot) for $N=20$ from random initial conditions in $[0,30]$, obtained via the explicit Euler method with step $10^{-2}$. Both simulations show convergence of the states to the integer consensus value $h=15$. System~\eqref{eq:complete} also exhibits order preservation as per Lemma~\ref{lemma:invariance}. }
\label{fig:complete-convergence-complete}
\end{figure}

\begin{theorem}[Convergence--Complete graph]\label{conv_compl}
Any Carath\'eodory or Krasowskii solution $x(t)$ of~\eqref{eq:complete} converges to a consensus point. Furthermore, if $x(t)$ is Carath\'eodory, then the limit point is necessarily of the form $(h,\ldots ,h)^\top $ with integer $h$. If instead $x(t)$ is Krasowskii, then the limit may be of the form $(h+\frac{1}{2},\ldots ,h+\frac{1}{2})^\top $.
\end{theorem}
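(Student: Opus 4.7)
The plan is to introduce the variance-from-consensus Lyapunov function $V(x) = \|x - x_a\1\|^2$, where $x_a = \frac1N\sum_i x_i$, and show it decays exponentially; this reduces the problem to identifying the limit point on the consensus subspace. First I would compute, at any continuity point of $f$, that $\dot x_i - \dot x_a = Q/N - q(x_i) - (N-1)(x_i - x_a)$ with $Q = \sum_j q(x_j)$. Multiplying by $x_i - x_a$, summing, and using $\sum_i(x_i - x_a) = 0$ gives $\tfrac12\dot V = -\sum_i(x_i - x_a)q(x_i) - (N-1)V$. The identity $\sum_i(x_i-x_a)q(x_i) = \frac1{2N}\sum_{i,j}(x_i - x_j)(q(x_i) - q(x_j))$ and the monotonicity of $q$ (Chebyshev's sum inequality) render this term non-negative, so $\dot V \leq -2(N-1)V$ pointwise. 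Because $\nabla V$ is continuous and $\K f(x)$ is the closed convex hull of limits of $f$ at nearby continuity points, the same inequality $\nabla V(x)\cdot v \leq -2(N-1)V(x)$ passes to every $v \in \K f(x)$, so it holds a.e.\ along any Krasowskii (hence Carath\'eodory) solution. Grönwall then yields $V(t) \leq V(0)\, e^{-2(N-1)t}$, so in particular $\max_i x_i(t) - \min_i x_i(t) \to 0$.

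For the Carath\'eodory case I would invoke Remark~\ref{rem:monotonicity}: the integer-valued $q_m(t)$ and $q_M(t)$ are eventually constant at some $q_m^*$ and $q_M^*$, and the vanishing of $\max - \min$ forces $q_M^* \in \{q_m^*,\, q_m^*+1\}$. In the single-cell subcase $q_M^* = q_m^* = h$ the dynamics ultimately reduce to $\dot x_i = (N-1)(h - x_i)$, so $x_i(t) \to h$. The two-cell subcase $q_M^* = q_m^* + 1$ must be excluded. Denoting $n_+(t)$ the number of indices in cell $q_m^*+1$, the formulas $\dot x_i = n_+ + (N-1)(q_m^* - x_i)$ for $i$ in the lower cell and $\dot x_j = (n_+ - 1) + (N-1)(q_m^* - x_j)$ for $j$ in the upper cell imply that, once all components lie in a sufficiently thin neighborhood of $q_m^* + \tfrac12$, at least one of the estimates $\dot x_i \geq \tfrac12$ (for every $i$ in the lower cell, when $n_+ \geq \lceil N/2\rceil$) or $\dot x_j \leq -\tfrac12$ (for every $j$ in the upper cell, when $n_+ \leq \lfloor N/2\rfloor$) must hold — the thresholds $n_+ \geq \lceil N/2\rceil$ and $n_+ \leq \lfloor N/2\rfloor$ cover all possibilities. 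Hence the minority cell is emptied in finite time, which would strictly increase $q_m$ or strictly decrease $q_M$ and contradict their eventual constancy. Thus the Carath\'eodory limit is necessarily of the form $h\1$ with $h \in \integer$.

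For the Krasowskii case the $q_m, q_M$ monotonicity fails (Example~\ref{example:4-path-contd}), so I would replace the previous step by an $\omega$-limit argument. Boundedness (Theorem~\ref{prop:basic_properties}) makes the $\omega$-limit set $\Omega$ non-empty, compact and connected, and by the Lyapunov step $\Omega$ is contained in the consensus subspace $D = \{\alpha\1 : \alpha \in \real\}$. Identifying $D$ with $\real$, the projected Krasowskii inclusion on $D$ reads $\dot\alpha \in (N-1)(\K q(\alpha) - \alpha)$, whose equilibria are exactly $\tfrac12\integer$. Any $\alpha^* \in \Omega \setminus \tfrac12\integer$ would sit strictly inside some cell $(k - \tfrac12, k + \tfrac12)$; once $V$ is small, in a neighborhood of $\alpha^*\1$ all $q(x_i)$ equal $k$, so $\dot x_a = (N-1)(k - x_a)$ has definite sign on that neighborhood, preventing $x_a$ from revisiting it infinitely often. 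Hence $\Omega \subseteq \tfrac12\integer \cdot \1$, a discrete subset of $D$; by connectedness $\Omega$ collapses to a single point $\alpha^*\1$ with $\alpha^* \in \tfrac12\integer$, and half-integer limits are realised (e.g.\ by the stationary trajectory at $(h+\tfrac12)\1$).

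The step I expect to be the main obstacle is the two-cell exclusion in the Carath\'eodory case: one must upgrade the heuristic ``the minority cell empties'' into an estimate robust to the possibility of simultaneous opposite-direction crossings. The saving observation is that the two thresholds $n_+ \geq \lceil N/2\rceil$ and $n_+ \leq \lfloor N/2\rfloor$ exhaust $\{0,\dots,N\}$, so whichever side is the current majority produces a strictly one-signed drift in the minority cell, yielding an unconditional finite-time extinction bound.
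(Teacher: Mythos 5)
Your route differs from the paper's in both halves, and the first half is a genuinely different and arguably cleaner argument. The paper first proves an order-preservation lemma (the hyperplanes $x_i=x_j$ are invariant for~\eqref{eq:complete}), relabels so that $x_1\le\dots\le x_N$, and then derives $\frac{\dd}{\dd t}(x_N-x_1)\le -(N-1)(x_N-x_1)$ from the monotonicity of the Krasowskii selection; your quadratic function $V=\|x-x_a\1\|^2$ together with the identity $\sum_i(x_i-x_a)q(x_i)=\frac1{2N}\sum_{i,j}(x_i-x_j)\bigl(q(x_i)-q(x_j)\bigr)\ge 0$ gives the same exponential contraction of the diameter without any invariance lemma, and the passage to $\K f$ is legitimate because the sign of each term $(x_i-x_j)(q(y_i)-q(y_j))$ is preserved for $y$ sufficiently close to $x$. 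Your $\omega$-limit argument for the Krasowskii case (limit set connected, contained in the consensus line, and disjoint from the open cells $(k-\frac12,k+\frac12)\setminus\{k\}$ because $\dot x_a$ has a definite sign there) is also sound and replaces the paper's more informal case distinction.

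The gap is in the two-cell exclusion for Carath\'eodory solutions, exactly where you anticipated trouble, and the observation that the thresholds $n_+\ge\lceil N/2\rceil$ and $n_+\le\lfloor N/2\rfloor$ exhaust all values does not close it. At the overlap $n_+=N/2$ ($N$ even) \emph{both} estimates fire: every lower component drifts up at rate at least $\frac12-(N-1)\eps$ \emph{and} every upper component drifts down at rate at least $\frac12-(N-1)\eps$, so there is no ``minority cell,'' and the configuration can terminate not by one cell emptying into the other but by the whole state reaching $(k+\frac12)\1$ exactly in finite time --- for $N=2$ the initial condition $(\frac12-\delta,\frac12+\delta)$ does precisely this. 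At that instant the Carath\'eodory solution sits at a point that is not a Carath\'eodory equilibrium and must be continued into one of the two adjacent diagonal cells; this is the very scenario the paper's proof handles explicitly (``$x(t)$ actually reaches $(h+\frac12)\1$ in finite time; afterwards \dots\ it necessarily converges to $(h,\dots,h)$ or $(h+1,\dots,h+1)$''), so the correct conclusion is not that the two-cell case is contradictory but that it ends at a half-integer waypoint from which the solution proceeds to an integer consensus. Your argument also leaves the bookkeeping of $n_+(t)$ implicit: a crossing of $k+\frac12$ changes $n_+$ and can flip which drift estimate is active, and ruling out back-and-forth flipping (and deciding which side a component actually enters when it hits the boundary, including simultaneous arrivals) requires the sector analysis from the existence proof of Theorem~\ref{prop:basic_properties}, not just the pointwise drift bound. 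All of these cases can be resolved and your conclusion is correct, but as written the ``unconditional finite-time extinction bound'' is asserted rather than proved.
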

This fact also implies that all Krasowskii equilibria of~\eqref{eq:complete} are consensus points.
In order to prove Theorem~\ref{conv_compl} we need the following lemma.

\begin{lemma}\label{lemma:invariance}
Let $x(t) $ be a Carath\'eodory or Krasowskii solution  of~\eqref{eq:complete} and let $s_{ij}(x)=x_i-x_j$ and $S_{ij}=\{ x\in \real ^N: s_{ij}(x)=0\}$. 
\begin{itemize}
\item[(A)] (Invariant manifolds) If $x (t_0) \in S_{ij}$, then $x(t) \in S_{ij}$
for all $t\geq t_0$.

\item[(B)] (Order preservation) If  $x_i(t_0)\leq x_j(t_0)$ for
  some $t_0\in \real$ then
  $x_i(t)\leq x_j(t)$  for all $t\geq t_0$.

\end{itemize}
\end{lemma}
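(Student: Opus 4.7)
My strategy is to reduce both statements to a one-dimensional differential inequality for the scalar $u(t) := x_i(t) - x_j(t)$. The specific structure of the complete-graph vector field~\eqref{fq} gives the clean identity
\[
f_i(x) - f_j(x) \;=\; q(x_j) - q(x_i) - (N-1)(x_i - x_j),
\]
because every $q(x_k)$ with $k \notin \{i,j\}$ appears in both $f_i$ and $f_j$ and cancels. Since the quantizer $q$ is non-decreasing, one has $(x_i - x_j)(q(x_j) - q(x_i)) \leq 0$ on all of $\real^N$, and therefore the pointwise dissipation inequality
\[
(x_i - x_j)\bigl(f_i(x) - f_j(x)\bigr) \;\leq\; -(N-1)(x_i - x_j)^2
\]
holds everywhere. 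This single inequality is the engine behind both parts.

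To prove (B), I would use the $C^1$ Lyapunov-like function $V(x) := \tfrac12\bigl((x_i - x_j)^+\bigr)^2$, which vanishes exactly on $\{x_i \leq x_j\}$ and has gradient $\nabla V(x) = (x_i - x_j)^+ (e_i - e_j)$. Since $V$ is $C^1$ and $x(\cdot)$ is absolutely continuous, the chain rule yields $\frac{d}{dt}V(x(t)) = (x_i(t) - x_j(t))^+\bigl(\dot x_i(t) - \dot x_j(t)\bigr)$ for almost every $t$. For a Carath\'eodory solution, $\dot x = f(x)$ and the dissipation inequality immediately give $\dot V \leq -2(N-1)V$. For a Krasowskii solution, $\dot x(t) \in \K f(x(t))$ a.e., and I must show that every $v \in \K f(x)$ satisfies $(x_i - x_j)^+ (v_i - v_j) \leq -2(N-1)V(x)$. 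When $x_i \leq x_j$ the factor $(x_i - x_j)^+$ vanishes and the bound is trivial; when $x_i > x_j$, for every $\delta < (x_i - x_j)/2$ any $y$ with $\|y - x\| < \delta$ still satisfies $y_i > y_j$, so the pointwise inequality applied at $y$ gives $(x_i - x_j)(f_i(y) - f_j(y)) \leq -(N-1)(x_i - x_j)(y_i - y_j) \leq -(N-1)(x_i - x_j)(x_i - x_j - 2\delta)$, and passing to the closed convex hull and then letting $\delta \to 0$ transfers the bound to $\K f(x)$. Gronwall's inequality then gives $V(x(t)) \leq V(x(t_0))\, e^{-2(N-1)(t-t_0)}$, so $x_i(t_0) \leq x_j(t_0)$ (equivalently $V(x(t_0)) = 0$) forces $V(x(t)) \equiv 0$ for $t \geq t_0$, which is exactly (B).

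Statement (A) is then immediate: if $x(t_0) \in S_{ij}$ then both $x_i(t_0) \leq x_j(t_0)$ and $x_j(t_0) \leq x_i(t_0)$, so applying (B) with the roles of $i$ and $j$ swapped yields both $x_i(t) \leq x_j(t)$ and $x_j(t) \leq x_i(t)$ for all $t \geq t_0$, i.e., $x(t) \in S_{ij}$. The main technical obstacle is the step that propagates the pointwise dissipation inequality to the whole Krasowskii set $\K f(x)$; the $\delta$-ball argument above sidesteps a direct analysis of $\K f$ at half-integer points (where $q$ becomes multivalued in the vicinity), since on the diagonal $\{x_i = x_j\}$ the gradient $\nabla V$ already vanishes and no information about $\K f$ is required there.
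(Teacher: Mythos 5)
Your proposal is correct, and it takes a genuinely different route from the paper. The paper proves (A) first, by a case analysis of the limit values of $f$ at points of $S_{ij}$ (distinguishing whether $x_i,x_j$ sit at half-integer values and from which of the four sectors the point is approached), concluding that each limit vector is either tangent to $S_{ij}$ or points toward it; (B) is then deduced from (A) by an intermediate-value/contradiction argument. You instead prove (B) first and directly, via the single pointwise inequality $(x_i-x_j)(f_i(x)-f_j(x))\le -(N-1)(x_i-x_j)^2$ (valid everywhere because the common terms $q(x_k)$, $k\notin\{i,j\}$, cancel and $q$ is nondecreasing), fed into a Gronwall estimate for the $C^1$ function $V=\tfrac12((x_i-x_j)^+)^2$; (A) then follows by applying (B) with $i$ and $j$ swapped. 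Your route buys several things: there is no case analysis on the discontinuity set at all; the Krasowskii case is handled rigorously by bounding a \emph{linear} functional on $\co\{f(y):\|y-x\|<\delta\}$ and letting $\delta\to0$, which is cleaner than the paper's sector argument (the latter implicitly appeals to an invariance principle for differential inclusions that is not spelled out); and you get the quantitative bonus $V(x(t))\le V(x(t_0))e^{-2(N-1)(t-t_0)}$, i.e.\ exponential decay of $(x_i-x_j)^+$, which is in the spirit of the contraction estimate the paper later uses in the proof of Theorem~\ref{conv_compl}. What the paper's approach offers in exchange is geometric insight into which sliding modes can occur on $S_{ij}$, which your Lyapunov argument deliberately avoids. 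All the individual steps of your argument check out: the gradient formula for $V$, the chain rule for $V\circ x$ with $x$ absolutely continuous, the $\delta$-ball argument ensuring $y_i>y_j$ near a point with $x_i>x_j$, and the passage to the closed convex hull.
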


\begin{proof}
We prove the statements for Krasowskii solutions. 
\noindent 
(A)
Let $x(t)$ be a Krasowskii solution such that $x(t_0)=x_0\in S_{ij}$. We prove that $x(t)$ can not leave $S_{ij}$.  If $x_0\in S_{ij}\backslash \Delta$, $f(x_0)$ is parallel to $S_{ij}$ in fact $\nabla s_{ij}(x_0)=e_i-e_j$
and 
\begin{align*}\nabla s_{ij}(x_0)\cdot f(x_0)=& f_i(x_0) -f_j(x_0)\\=
&\sum_{h\not =i}[q({x_0}_h)-{x_0}_i]-\sum_{h\not =j}[q({x_0}_h)-{x_0}_j]\\=
&q({x_0}_j)-q({x_0}_i)- (N-1)({x_0}_i-{x_0}_j)=0.
\end{align*}
Let now $x_0\in S_{ij}\cap \Delta$, i.e.\ ${x_0}_h=k_h+\frac{1}{2}$ for $h\in\I '\subset \I$ and some $k_h\in \integer$. We first consider the case $i,j\not \in \I '$. In this case for any limit value $f^l(x_0)$ of $f(x_0)$ at $x_0$ one has 
$$\nabla s_{ij}(x_0)\cdot f^l(x_0)= f^l_i(x_0) -f^l_j(x_0)= q({x_0}_j)-q({x_0}_i)- (N-1)({x_0}_i-{x_0}_j)=0.$$ 
meaning that all  $f^l(x_0)$ are parallel to $S_{ij}$. 

Then we consider the case ${x_0}_i={x_0}_j=k+\frac{1}{2}$ for some $k\in \integer$. We consider  four limit values   $f^{(i)}(x_0)$, $f^{(ii)}(x_0)$, 
$f^{(iii)}(x)$, $f^{(iv)}(x)$ of vector the field $f(x_0)$, depending on the fact that $x_0$ is approached from the following sectors:
\begin{itemize}
\item [(i)] $x_i>k+\frac{1}{2}$, $x_j<k+\frac{1}{2}$
\item [(ii)] $x_i>k+\frac{1}{2}$, $x_j>k+\frac{1}{2}$
\item [(iii)] $x_i<k+\frac{1}{2}$, $x_j>k+\frac{1}{2}$
\item [(iv)] $x_i<k+\frac{1}{2}$, $x_j<k+\frac{1}{2}$. 
\end{itemize}
Note that $$\nabla s_{ij}(x_0)\cdot f^{(ii)}(x_0)=\nabla s_{ij}(x_0)\cdot f^{(iv)}(x_0)=0,$$ 
meaning that  $f^{(ii)}(x_0)$ and $f^{(iv)}(x_0)$ are parallel to $S_{ij}$ at $x_0$.
In sector (i)  one has $s_{ij}(x)= x_i-x_j>0$ and 
$$\nabla s_{ij}(x_0)\cdot f^{(i)}(x_0)= -1.$$ 
This means that $f^{(i)}(x_0)$ points towards $S_{ij}$ and the solution can not move towards the sector (i).
Analogously in (iv)  one has $s_{ij}(x)= x_i-x_j<0$ and $\nabla s_{ij}(x_0)\cdot f^{(iv)}(x_0)= 1$. This means that $f^{(iv)}(x_0)$ points towards $S_{ij}$ and the solution can not move towards the sector (iv).

\noindent
(B) This is an immediate consequence of the previous statement. 
Assume first that $x_i(t_0)<x_j(t_0)$ and $x_i(t^*)>x_j(t^*)$ for some
$t^*>t_0$. Then there exist $\bar t\in (t_0,t^*)$ such that $x_i(\bar
t)=x_j(\bar t)$ and $t^*>t_0$ such that $x_i(t^*)>x_j(t^*)$, which
contradicts (A).
In the case $x_i(t_0)=x_j(t_0)$ and $x_i(t^*)>x_j(t^*)$ for some
$t^*>t_0$ the contradiction with (A) is immediate.
\end{proof}

\begin{proof}[Proof of Theorem~\ref{conv_compl}] As in the proof of the lemma we prove convergence for the more general case of Krasowskii solutions. Let $x(t)$ be any Krasowskii solution of~\eqref{eq:complete}.  
Thanks to the previous lemma, we can assume  without loss of generality that $x_1(t_0)\leq \dots\leq x_N(t_0)$. 
Indeed, by order preservation we have
$x_1(t)\leq \ldots \leq x_N(t)$ for all $t\geq t_0$ and $x_N(t)-x_1(t)\geq
0$ for all $t\geq t_0$.
We  prove that $x_N(t)-x_1(t)\to 0$ as $t\to +\infty $. 
If there exists $T$ such that $x_N(T)=x_1(T)$, then, thanks to Lemma~\ref{lemma:invariance}, $x_N(t)-x_1(t)= 0$ with $t\geq T$.
If instead $x_N(t)>x_1(t)$ for all $t\geq t_0$, then we can recall that $\dot x(t)\in -Dx(t)+A\K\Q (x(t))$, where $a_{ij}=1$ if $i\not =j$ and $a_{ij}=0$ if $i =j$, and there exists a measurable function $v(t)\in \K\Q (x(t))$  such that 
$\dot x(t)= -Dx(t)+Av(t)$.
We can then write
\begin{align*}
\frac{\dd}{\dd t}(x_N(t)-x_1(t))=&\sum _{j\not =N} v_j(t)- (N-1)x_N(t)- \sum _{j\not =1} v_j(t)+ (N-1)x_1(t)\\
=&- (N-1)(x_N(t)-x_1(t))- (v_N(t)-v_1(t))
\end{align*}

As $x_N(t)>x_1(t)$ we have that $v_N(t)\geq v_1(t)$, i.e.\ $v_N(t)-v_1(t)\geq 0$ and then
\begin{align*}
\frac{\dd}{\dd t}(x_N(t)-x_1(t))=&- (N-1)(x_N(t)-x_1(t))- (v_N(t)-v_1(t))\\ 
\leq &- (N-1)(x_N(t)-x_1(t)),
\end{align*}
which implies that $x_N(t)-x_1(t)\to 0$ as $t\to +\infty $. 

We now distinguish two cases. If $x(t)$ is in the interior of $S_{(h,\ldots ,h)}$ for some $t$
and some $h\in \integer$, then 
$$\dot x(t)=f(x(t))=- (N-1)(x(t)-(h,\ldots ,h)^\top )$$
 and $x(t)\to (h,\ldots ,h)^\top $ as $t\to +\infty$. 
Otherwise, $x(t) \to (h+\frac{1}{2},\ldots ,h+\frac{1}{2})^\top $ for some
$h\in \integer$. In this case, $x(t)$ actually reaches $(h+\frac{1}{2},\ldots ,h+\frac{1}{2})^\top$ in finite time. Afterwards, if $x(t)$ is a Carath\'eodory solution, it necessarily converges to $(h,\ldots ,h)$ or to $(h+1,\ldots ,h+1)$. If instead $x(t)$ is a Krasowskii solution, it may stay at the Krasowskii equilibrium $(h+\frac{1}{2},\ldots ,h+\frac{1}{2})$.
\end{proof}


\subsection{Complete bipartite graph}

In the case the graph is complete bipartite, equations~\eqref{eq:dynamics} read
\begin{align}\label{eq:complete_bipartite}
\dot x_i =& \sum _{h\in \cal Q} [q(x_h)-x_i]\\
\nonumber\dot x_h =& \sum _{i\in \cal P } [q(x_i)-x_h]
\end{align}
where ${\cal P}=\{ 1,\ldots ,p \}$ and ${\cal Q}=\{ p+1,\ldots ,N  \}$.
\begin{theorem}[Convergence--Complete bipartite]
Any Carath\'eodory or Krasowskii solution to~\eqref{eq:complete_bipartite} converges to a consensus point. 
\end{theorem}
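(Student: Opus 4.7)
My plan exploits the bipartite structure to reduce the problem to a two-dimensional system governing the partition averages. The key observation, parallel to the use of order preservation in the complete-graph proof, is that any two agents in the same partition receive identical quantized inputs from the other partition, so their pairwise difference evolves by a continuous linear ODE free of any discontinuity, yielding exponential synchronization inside each partition.

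First, I would establish intra-partition synchronization. For $i, j \in {\cal P}$ the quantity $f_i(x) - f_j(x) = -(N-p)(x_i - x_j)$ is continuous (in fact linear) in $x$, so every Krasowskii selector $w \in \K f(x)$ automatically satisfies $w_i - w_j = -(N-p)(x_i - x_j)$. Hence, along any Krasowskii solution,
\begin{equation*}
\tfrac{d}{dt}(x_i(t) - x_j(t)) = -(N-p)\,(x_i(t) - x_j(t)), \qquad i, j \in {\cal P},
\end{equation*}
and analogously with rate $p$ for $i, j \in {\cal Q}$. Defining the partition averages $y(t) := \tfrac{1}{p}\sum_{i \in \cal P} x_i(t)$ and $z(t) := \tfrac{1}{N-p}\sum_{h \in \cal Q} x_h(t)$, this implies $x_i(t) - y(t) \to 0$ for $i \in {\cal P}$ and $x_h(t) - z(t) \to 0$ for $h \in {\cal Q}$, both exponentially fast.

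Second, I would derive the dynamics of the averages, namely $\dot y = \sum_{h \in \cal Q} q(x_h) - (N-p)\,y$ and $\dot z = \sum_{i \in \cal P} q(x_i) - p\,z$, and reduce them to a 2D system. By the synchronization of Step 1 and the boundedness in Theorem~\ref{prop:basic_properties}, after a finite transient one has $q(x_i) = q(y)$ for all $i \in {\cal P}$ (resp.\ $q(x_h) = q(z)$ for all $h \in {\cal Q}$) whenever $y$ (resp.\ $z$) stays outside a small neighborhood of the half-integers. The effective asymptotic dynamics thus become the two-dimensional system
\begin{equation*}
\dot y = (N-p)\bigl(q(z) - y\bigr), \qquad \dot z = p\bigl(q(y) - z\bigr).
\end{equation*}
I would then argue by case analysis on the pair $(q(y), q(z))$, which takes only finitely many values by boundedness. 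If $q(y) = q(z) = h$, the square $(h - \tfrac12, h + \tfrac12)^2$ is forward-invariant and has $(h, h)$ as its globally attracting equilibrium. If $q(y) < q(z)$, then $\dot y > 0$ and $\dot z < 0$, so $y$ and $z$ move toward one another until either the quantization levels merge (reducing to the previous case and producing an integer consensus) or a Krasowskii sliding mode traps the trajectory at a half-integer consensus $(h + \tfrac12, h + \tfrac12)$. Combining this with Step 1 yields convergence of every Krasowskii solution, and \emph{a fortiori} every Carath\'eodory solution, to a consensus point.

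The main obstacle I expect is in the third step: rigorously handling the finite-time transitions of $(q(y), q(z))$, in particular when $y$ and $z$ cross half-integer thresholds simultaneously, and verifying that the associated Krasowskii sliding modes can only produce a half-integer consensus rather than some non-consensus invariant set in the plane.
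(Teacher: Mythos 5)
Your Step 1 is correct and coincides with the first half of the paper's proof: the paper also begins by showing that the Krasowskii selector components $v_i(t)$ are equal for all $i$ in the same partition (because every agent in $\cal P$ receives exactly the same quantized input $\sum_{h\in\cal Q}q(x_h)$), whence $\frac{\dd}{\dd t}(x_i-x_j)=-(N-p)(x_i-x_j)$ for $i,j\in\cal P$ and the analogous identity in $\cal Q$. Your observation that $f_i-f_j$ is a continuous function of $x$, so that every Krasowskii selection must reproduce it, is a clean equivalent of that argument.

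The gap is in Steps 2--3. The reduction to the autonomous planar system $\dot y=(N-p)(q(z)-y)$, $\dot z=p(q(y)-z)$ is not justified: intra-partition synchronization is only asymptotic, never achieved in finite time (unless two components coincide exactly), so at every finite time the agents of $\cal P$ may straddle a half-integer threshold and carry different quantization levels; moreover, for a Krasowskii solution the selector replacing $\sum_{i\in\cal P}q(x_i)$ is a convex combination determined by which \emph{individual} components $x_i$ sit on thresholds, not by $q(y)$. Since the entire difficulty of the theorem is concentrated precisely at these thresholds, you cannot dismiss them by restricting to times when $y$ and $z$ are ``outside a small neighborhood of the half-integers'': a priori the trajectory could hover near a threshold forever, and the purported planar case analysis does not apply there. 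Your own closing paragraph concedes that the threshold crossings and the exclusion of non-consensus sliding sets remain unresolved --- but that is exactly the content of the theorem beyond Step 1. The paper closes this gap differently: it works with $x_m(t)=\min_i x_i(t)$ and $x_M(t)=\max_i x_i(t)$ (which eventually lie in opposite partitions), proves the differential inequality $\frac{\dd}{\dd t}(x_M-x_m)\le -1$ as long as $x_m$ and $x_M$ lie in different quantization cells, and then exhausts the three boundary configurations that can terminate this decrease (one endpoint on a threshold, both on adjacent thresholds, or both meeting at the same half-integer), using the invariance of coincident components to conclude in the last case. Some version of this finite-time decrease and boundary case analysis is what your proposal still needs to supply.
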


\begin{proof}Let $x(t)$ be any Krasowskii solution of~\eqref{eq:complete_bipartite}. 
First of all we prove that  $x_i(t)-x_j(t)\to 0$ as $t\to +\infty$ for $i,j\in \cal P$ and  $x_h(t)-x_k(t)\to 0$ as $t\to +\infty$ for $h,k\in \cal Q$.
We recall that $x(t)$ is such that 
$$\dot x(t)\in -Dx(t) +\K(A\Q(x(t)))$$ 
and that there exists a measurable function $v(t)$ such that 
$\dot x(t)= v(t)- D x(t)$  with $v(t)\in \K(A\Q (x(t)))$.
In the case of the bipartite graph $d_{ii}=N-p$, $d_{hh}=p$, $(A\Q (x))_i=\sum _{h\in \cal Q} q(x_h)$,  $(A\Q (x))_h=\sum _{i\in \cal P} q(x_i)$ with $i=1,\ldots ,p$ and $h=p+1,\ldots ,N$.

We denote by $v_i(t)$ the $i$-th component of $v(t)$  and prove that $v_i(t)=v_j(t)$  for $i,j\in \cal P$.
Clearly if $x_h(t)\not = k+1/2$, $k\in \integer$, for all $h\in \cal Q$, then 
 $v_i(t)=\sum _{h\in \cal Q}q(x_h(t))$ for all $i\in \cal P$.
If  $x_h(t) = k_h+1/2$, for some  $k_h\in \integer$ and  for some $h\in \cal Q'\subset \cal Q$, there exist $2^{|\cal Q'|}=M$ limit values $Q^1,\ldots ,Q^M$ for $\Q(x)$ at $x(t)$, 
and 
$v(t)\in \K(A\Q (x(t))$ can be written
$v(t)=\alpha _1 AQ^1+\ldots \alpha _M AQ^M $,  with  $\alpha _1,\ldots ,\alpha _M\in [0,1]$ and   $\alpha _1+\ldots +\alpha _M=1$.
As $(A\Q(x))_i=\sum _{h\in \cal Q}q(x_h)$ for any $i=1,\ldots ,p$ we get that $v_i(t)=v_j(t)$ for any $i,j\in \cal P$.
Analogously, one has $v_h(t)=v_k(t)$  for $h,k\in \cal Q$.  

For any $i,j\in \cal P$ we thus get  
$$\frac{\dd}{\dd t} (x_i(t)-x_j(t))=v_i(t)-px_i(t)- v_j(t)+p x_j(t)=-p(x_i(t)-x_j(t))$$
which  implies that $x_i(t)-x_j(t)\to 0$ as $t\to +\infty$. Moreover if there exists $\tau $ such that $x_i(\tau )=x_j(\tau )$ for $i,j\in \cal P$, then $x_i(t)=x_j(t)$ for all $t\geq \tau$.
Analogously one has $x_h(t)-x_k(t)\to 0$ as $t\to +\infty$ for any $h,k\in \cal Q$ and if there exists $\tau ' $ such that $x_h(\tau ')=x_k(\tau ')$ for $i,j\in \cal P$, then $x_h(t)=x_k(t)$ for all $t\geq \tau '$.
The  fact that  $x_i(t)-x_j(t)\to 0$ and $x_h(t)-x_k(t)\to 0$ as $t\to +\infty$ also implies that either there exists $T$ such that $x_i(t)\leq x_h(t)$ for all $i\in \cal P$, $h\in \cal Q$  and for all $t\geq T$ or $x_i(t)-x_h(t)\to 0$ for all $i\in \cal P$, $h\in \cal Q$.

Let $m(t)\in\{ 1,\ldots ,N\} $ be such that $x_{m(t)}(t)=\min \{ x_i(t), i=1,\ldots ,N\} $ and $M(t)\in\{ 1,\ldots ,N\} $ be such that $x_{M(t)}(t)=\max \{ x_i(t), i=1,\ldots ,N\} $.
We will simply denote $x_{m(t)}(t)=x_m(t)$ and $x_{M(t)}(t)=x_M(t)$. Thanks to  the previous remark we can assume that  $m(t)\in \cal P$ and $M(t)\in \cal Q$ definitively.
We first assume that there exist $T'\in \real $ and $q\in \integer $   such that $x_m(T'), x_M(T')\in (q-1/2,q+1/2)$. 
In this case, from~\eqref{eq:complete_bipartite} we get that $x_m(t), x_M(t)\in (q-1/2,q+1/2)$ for all $t\geq T'$  and $x_i(t)\to q$ for all $i\in \cal P$, $x_h(t)\to q$ for all $h\in \cal Q$.

We then prove that there exist $T$ and $q\in \integer $ such that either  $x_m(T), x_M(T)\in (q-1/2,q+1/2)$ or  $x_m(t)= x_M(t)=q+1/2$ for all $t\geq T$.

Note that as far as $x_m(t)\in [q-1/2, q+1/2)$ and $x_M(t)>q+1/2$,  for almost all $t$ one has
$$\frac{\dd}{\dd t}(x_M(t)-x_m(t))\in \K\Big(\sum _{i\in \cal P } [q(x_i)-x_M]-\sum _{h\in \cal Q } [q(x_h)-x_m]\Big)$$
and if $v\in  \K(\sum _{i\in \cal P } [q(x_i)-x_M]-\sum _{h\in \cal Q } [q(x_h)-x_m])$, then $v\leq -1$. This fact implies that 
$x_M(t)-x_m(t)$ decreases until one of the following cases is reached:

\begin{itemize}
\item{} $q-1/2<x_m(t')<q+1/2,\  x_M(t')=q+1/2$,
\item{} $x_m(t')=q-1/2,\  x_M(t')=q+1/2$,
\item{} $x_m(t')=x_M(t')=q+1/2$.
\end{itemize}

In the first case, $q(x_i(t'))=q$.  Since  $q-(q+1/2)=-1/2<0$, we get  that there exists $t''>t$ such that $x_M(t)$ decreases in $(t',t'')$ and  $x_m(T), x_M(T)\in (q-1/2,q+1/2)$ at some time $T\in (t',t'')$.

The second case is analogous to the first one with  $x_m(t)$ increasing and $x_M(t)$ decreasing in a right  neighbourhood $(t', t'')$ of $t'$, so that $x_m(T), x_M(T)\in (q-1/2,q+1/2)$ at some time $T\in (t',t'')$.

Finally in the third case it is sufficient to recall that if $x_i(t')=x_j(t')$  and $x_h(t')=x_k(t')$ then $x_i(t)=x_j(t)$  and $x_h(t)=x_k(t)$ for all $t\geq t'$. This implies that  $x_i(t), x_h(t)\to q$  or  $x_i(t), x_h(t)\to q+1$ or $x_i(t)= x_h(t)= q+1/2$ for all $i\in \cal P$, $h\in \cal Q$.

By exhausting the cases, we have proved the claim and consequently completed the proof.
\end{proof}


\section{Simulations on general graphs}\label{sec:simulations}
In order to explore the convergence properties of~\eqref{eq:dynamics}, we have constructed numerical solutions by applying the explicit Euler method from random initial conditions. These simulations show that non-consensus equilibria are quite common on non-structured graphs: two examples are shown in Figures~\ref{fig:geometric-graph} and~\ref{fig:random-graph}. Instead, we are unable to observe in the simulations any chattering phenomena that would indicate attractive Krasowskii solutions sliding on the discontinuity set.

\begin{figure}[htb]
\centering
\psfrag{xct}{$x$}
\psfrag{q(xct)}{$q(x)$}
\psfrag{time}{$t$}
\includegraphics[width=0.46\columnwidth]{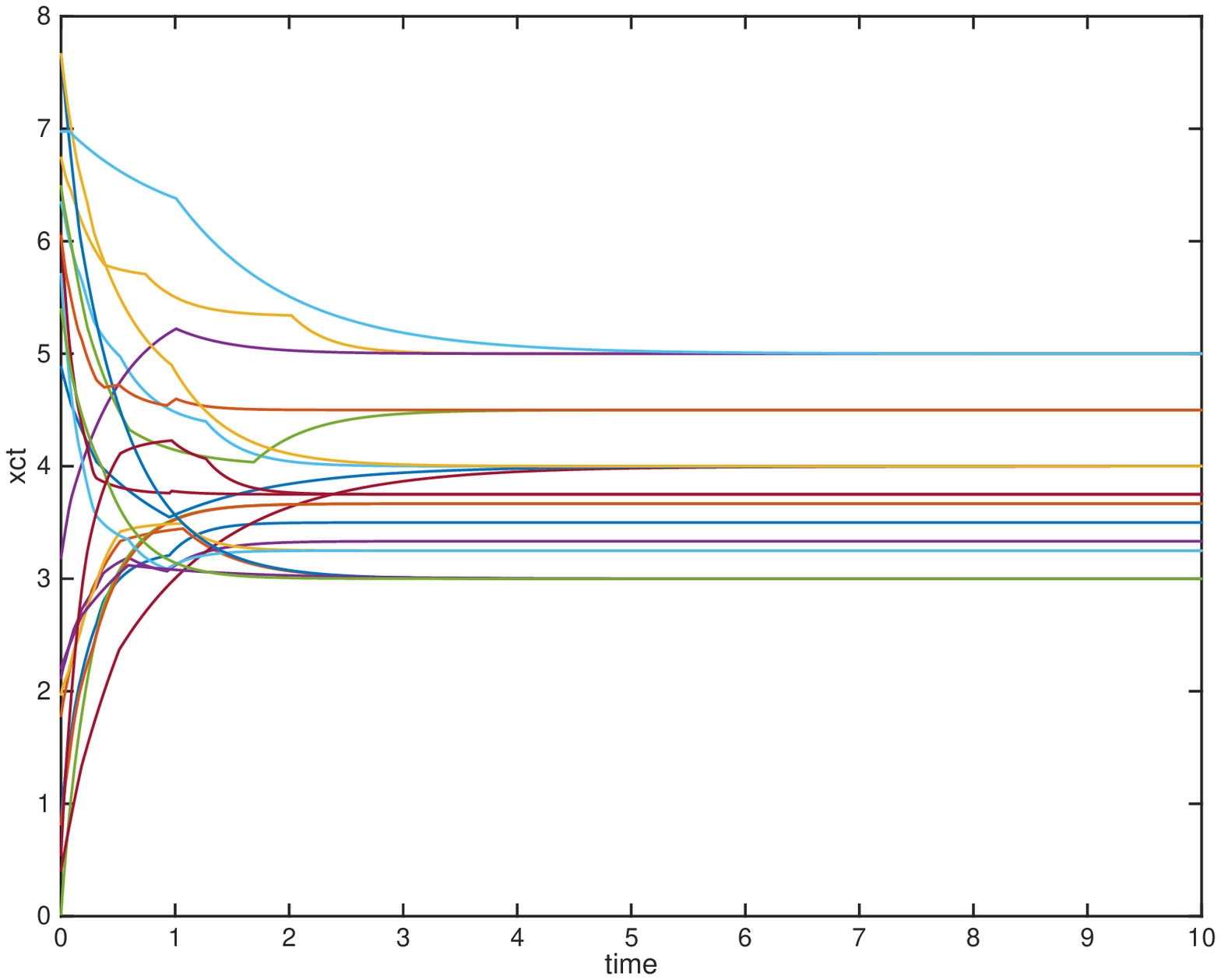}\hspace{0.05\columnwidth}
\includegraphics[width=0.46\columnwidth]{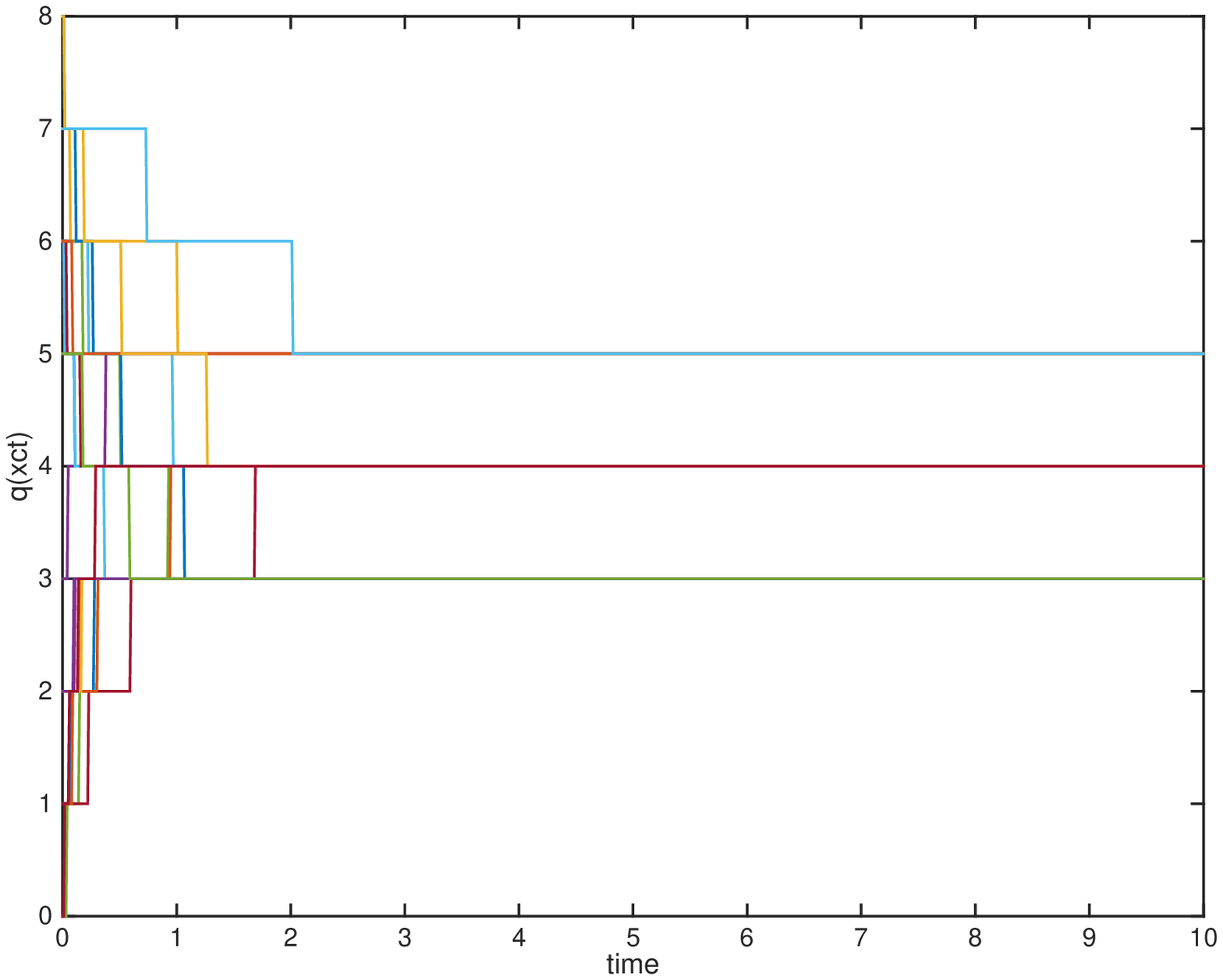}
\caption{Evolution of $x$ and $q(x(t))$ on a connected graph with $N=20$ generated as an instance of random geometric graph with connectivity radius $0.2$. The Euler method step is $0.01$.}
\label{fig:geometric-graph}
\end{figure}

\begin{figure}[htb]
\centering
\psfrag{xct}{$x$}
\psfrag{q(xct)}{$q(x)$}
\psfrag{time}{$t$}
\includegraphics[width=0.46\columnwidth]{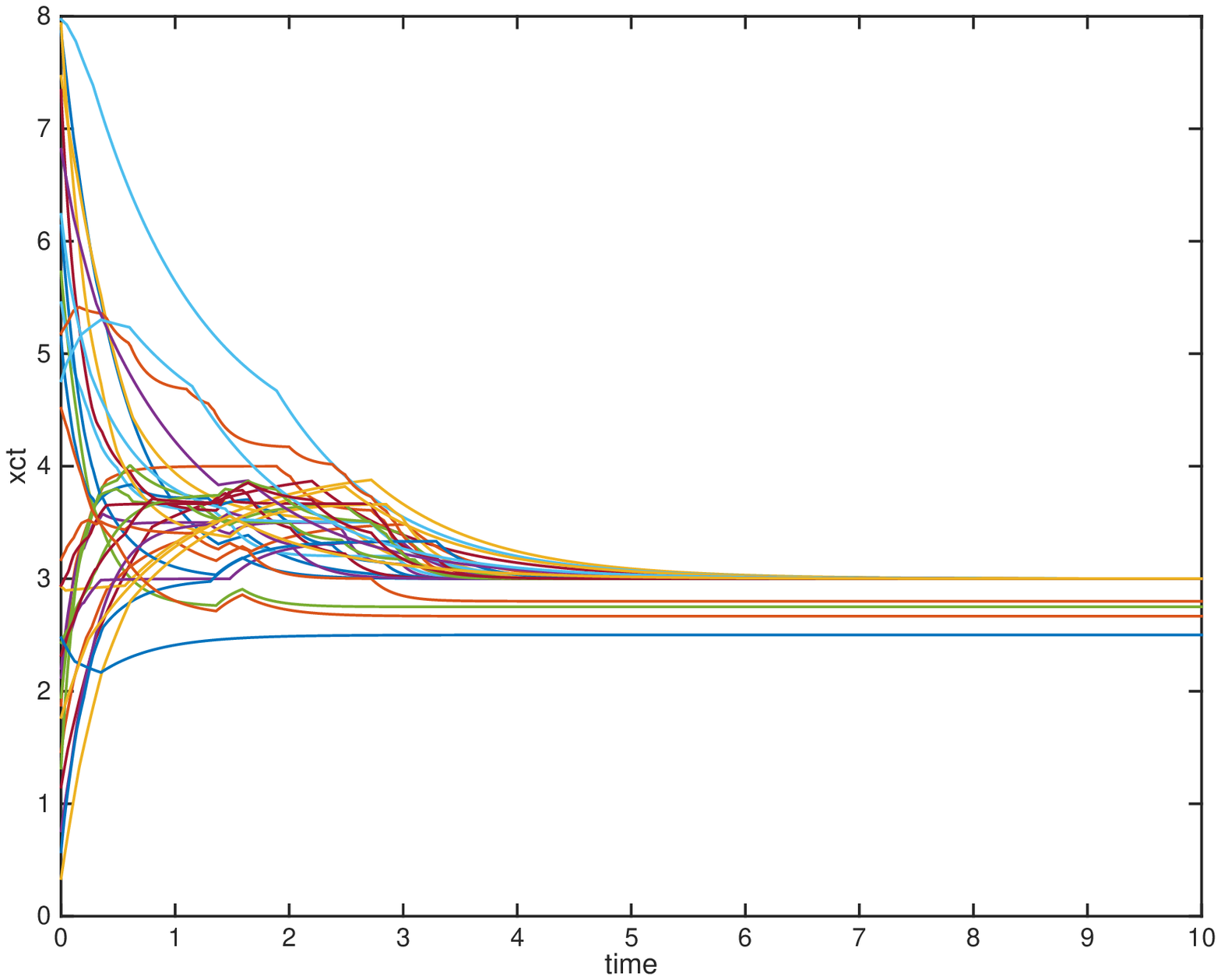}\hspace{0.05\columnwidth}
\includegraphics[width=0.46\columnwidth]{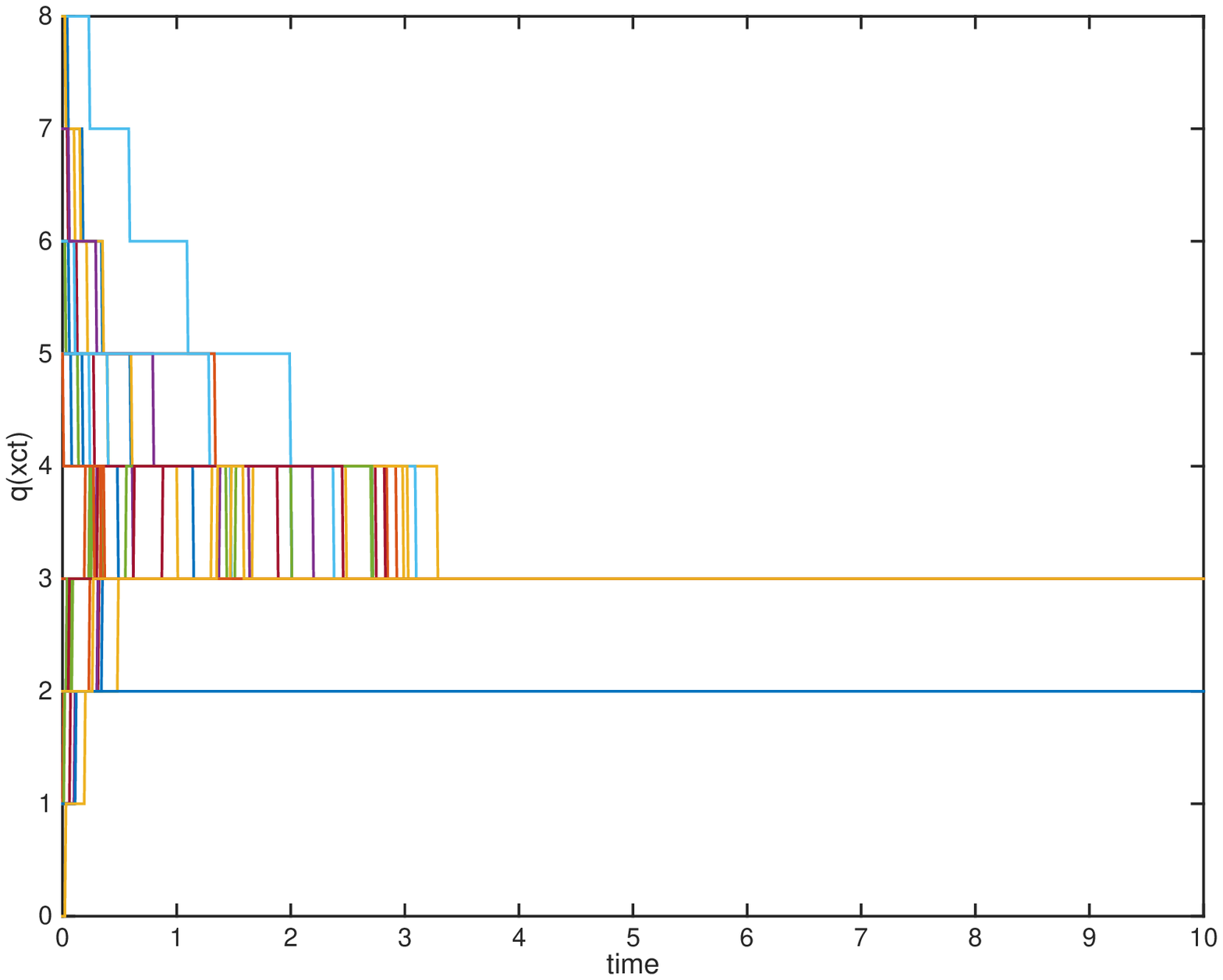}
\caption{Evolution of $x$ and $q(x(t))$ on a connected directed graph with $N=20$, randomly generated by connecting each ordered pair of nodes with probability $0.1$. The Euler method step is $0.01$.}
\label{fig:random-graph}
\end{figure}

\section{Conclusion}\label{sec:outro}
In this article we have addressed the problem of opinions formation in a group where opinions are continuous variables that are communicated to other individuals by means of discrete variables, which may represent behaviours, choices, or limited verbalization capabilities. 
We have shown that in general we can not expect consensus and unexpected equilibria may appear. We think this is interesting from the point of view of opinion dynamics, as a good model should be able to justify agreement and disagreement at the same time. Our positive results consist in an asymptotic estimate of the  distance from consensus depending on the communication graph and in the proof of consensus for the special cases the communication graph is either complete or bipartite complete. Finding a necessary and sufficient condition for consensus remains an open problem.

\bibliographystyle{plainurl}

\end{document}